\documentclass{llncs}
\title{Streaming algorithms for recognizing nearly well-parenthesized
expressions}
\author{Andreas Krebs \inst{1}
	\and
	Nutan Limaye  \inst{2}
	\and
	Srikanth Srinivasan\inst{3}\thanks{Supported by NSF grants DMS-0835373}}
\institute{
University of T\"ubingen, Germany. \email{mail@krebs-net.de}
\and
Indian Institute of Technology, Bombay, India. \email{nutan@cse.iitb.ac.in}
\and
DIMACS Center, Rutgers University, USA. \email{srikanth@dimacs.rutgers.edu}
}
\date{}
\usepackage{amsmath}
\usepackage{algorithmic}
\usepackage{algorithm}
\usepackage{fullpage}
\sloppy
\usepackage{epsfig}
\usepackage{multicol}
\usepackage{amssymb}

\newtheorem{obs}{Observation}

\newcommand{\F}{\mathbb{F}}

\newcommand{\U}{\mathcal{U}}

\newcommand{\Dltk}[1]{\Delta^{\leq k}{(#1)}}
\newcommand{\Choose}[2]{\ensuremath{{#1\choose#2}}}

\newcommand{\Dyckt}{\mbox{\sf Dyck}_2}
\newcommand{\Dyckl}{\mbox{\sf Dyck}_l}
\newcommand{\otDyckt}{1\mbox{-turn-}\mbox{\sf Dyck}_2}
\newcommand{\otDyckl}{1\mbox{-turn-}\mbox{\sf Dyck}_l}

\newcommand{\NCo}{\mbox{\sf NC}$^1$}
\newcommand{\Ham}[1]{\mbox{\sf Ham}_{{#1}}}

\newcommand{\PHam}[1]{\mbox{\sf PHam}_{{#1}}}

\newcommand{\IND}{\mbox{\sf IND}$^a_{\mathcal{U}}$}

\newcommand{\naturals}{\mathbb{N}}
\newcommand{\prob}[2]{\mathop{\mathrm{Pr}}_{#1}[#2]}

\begin{document}
\maketitle
\bibliographystyle{plain}

\begin{abstract}
We study the streaming complexity of the membership problem of $\otDyckt$ and
$\Dyckt$ when there are a few errors in the input string. \\

{\bf $\otDyckt$ with errors:} We prove that there exists a randomized one-pass
algorithm that given $x$ checks whether there exists a string
$x' \in \otDyckt$ such that $x$ is obtained by flipping at most $k$ locations of
$x'$ using: 
\begin{itemize}  
\item $O(k \log n)$ space, $O(k \log n)$ randomness, and $poly(k \log n)$ time
per item and with
error at most $1/n^c$. 
\item $O(k^{1+\epsilon} + \log n)$ space for every $0 \leq \epsilon \leq 1$, $O(\log n)$
randomness, $O((\log^{O(1)}n + k^{O(1)}))$ time per item, with error at most $1/8$.
\end{itemize}
Here, we also prove that any randomized one-pass algorithm that makes error at most $k/n$
requires at least $\Omega(k \log(n/k))$ space to accept strings
which are exactly $k$-away from strings in $\otDyckt$ and to reject strings
which are exactly $k+2$-away from strings in  $\otDyckt$. 
Since $\otDyckt$ and the Hamming Distance problem are closely related we also
obtain new upper and lower bounds for this problem.\\

{\bf $\Dyckt$ with errors:} We prove that there exists a randomized one-pass
algorithm that given $x$ checks whether there exists a string
$x' \in \Dyckt$ such that $x$ is obtained from $x'$ by changing (in some
restricted manner) at most $k$ positions using: 
 
\begin{itemize}  
\item $O(k \log n + \sqrt{n \log n})$ space, $O(k \log n)$ randomness,
$poly(k \log n)$ time per element and with error at
most $1/n^c$.
\item $O(k^{1+\epsilon}+ \sqrt{n \log n})$ space for every $0 \leq \epsilon \leq
1$, $O(\log n)$ randomness, $O((\log^{O(1)}n + k^{O(1)}))$ time per element, 
with error at most $1/8$.
\end{itemize}

\end{abstract}
\section{Introduction}
The data streaming model was introduced in the seminal work of Alon et al.
\cite{Alon96}. This model naturally arises in situations where the input data is
massive and rereading the input bits is expensive. The main parameters that play
a role in designing algorithms in such situations are: the space used by the
algorithm, and the number of passes made over the input. An algorithm is said to
be an efficient data streaming algorithm, if the space used by the algorithm is
substantially lesser than the length of the input (sublinear in the length of
the input) and the number of passes is independent of the length of the input. 
Many variants of this basic model have been studied. (See for
example \cite{muthu05} for a survey.)

The membership testing for well-paranthesises strings has been considered in the
past. We denote the set of words with balanced parentheses of
$l$ different types by $\Dyckl$.
It is known that there is a $O(\log n)$ space deterministic algorithm for
testing membership in $\Dyckl$. (In fact the problem is known to be in {\sf
TC}$^0$ \cite{BC89}.) The problem has been considered from property testing
perspective (see for example \cite{AKNS99}, \cite{PRR01}). 
Recently, the problem was considered in the streaming model by Magniez et al.
\cite{MMN09}. It was proved that there is a randomized one-pass streaming
algorithm that takes space $O(\sqrt{n \log n})$ and tests membership in
$\Dyckl$. They also gave an efficient $O(\log
^2n)$ space algorithm which makes bidirectional pass (one forward and one
backward pass) on the input. They also proved a lower bound of $\Omega(\sqrt{n})$ for any
randomized streaming algorithm that makes a single unidirectional (only forward) pass. 
Chakrabarti et al. \cite{CCKM11} and Jain et al. \cite{JainN10} considered the
lower bound problem for unidirectional multi-pass randomized algorithms. In
\cite{CCKM11} it was proved that any $T$-pass (all passes made in the same
direction) randomized algorithm requires $\Omega(\sqrt{n}/T\log \log n)$ space.
Whereas \cite{JainN10} proved $\Omega(\sqrt{n}/T)$ space lower bound
for the same. In \cite{BLV10} membership testing for other classes of languages
was considered. In \cite{BLRV11} it was proved that any randomized $T$ pass
algorithm (passes made in any direction) for testing membership in a
deterministic context-free language requires $\Omega(n/T)$ space. 

We consider a slightly general version of the membership testing problem for
$\Dyckl$.  
Let $\Sigma_l$ denote a set of $l$ pairs of matching parentheses.
We say that an opening parenthesis is \emph{corrupted} if it is
replaced by another opening parenthesis. 
Similarly, a closing
parenthesis is \emph{corrupted} if it is replaced by another closing
parenthesis. For a language $L \in \Sigma_l^*$, let $\Dltk{L}$ be
defined as the set of words over $\Sigma_l^*$ obtained by corrupting
at most $k$ indices of any word in $L$.
In this paper, we consider the membership problem for $\Dltk{\Dyckl}$ and
$\Dltk{\otDyckl}$, where 
$$\otDyckt = \{w\overline{w}^R~|~ w \in \{(, [\}^n~ n\geq 1\}$$
Here, $\overline{w}$ is the string obtained from $w$ by replacing an
opening parenthesis by its corresponding closing parenthesis and $w^R$ is
the reverse of $w$.

Accepting strings with at most $k$ errors is a well-studied problem in many
models of computation. In the streaming model, the problem has been studied in
the past (see for example Cao et al. \cite{CEQZ06}). But we believe that the
problem needs further investigation; this being the primary goal of this paper.

We observe that the membership testing problem for $\Dltk{\Dyckl}$ ($\Dltk{\otDyckl}$)
reduces to the membership testing problem of $\Dltk{\Dyckt}$ ($\Dltk{\otDyckt}$,
respectively). %
We give a simple fingerprinting algorithm
for $\Dltk{\otDyckt}$ that uses $O(k\log n)$ bits of space and randomness. 
The space requirements of this algorithm  are nearly
optimal (because of a communication complexity lower bound of
\cite{HSZZ06}) but the randomness requirements are not. 
We consider the question of derandomizing the above.  The question of
derandomizing streaming algorithms has been considered in the
past (see for example
\cite{Ganguly08},\cite{Nisan90},\cite{RU10},\cite{Shaltiel09}). We
show that the algorithm can be modified to work with just $O(\log n)$
bits of randomness, incurring a small penalty in the
amount of space used. We then consider similar questions for the more
general problem $\Dltk{\Dyckt}$. 
The following table summarizes our algorithmic results:
\smallskip

\noindent\begin{tabular}{|l|l|l|l|l|l|}
\hline
& One-pass & & & &\\
Problem& Algorithm & Space & Randomness & Error & Time (per element)\\
\hline
 & ~~~~1 &
$O(k \log n)$ & $O(k \log n)$ & %
inverse poly
& $poly(k \log n)$ \footnotemark \\
$\otDyckt$ & ~~~~2 &for all $0 < \epsilon < 1$:& & & \\
& & $O(k^{1 + \epsilon} +\log n )$ & $O( \log n )$ & $1/8$ &
$O((\log n)^{O(1)} + k^{O(1)})$  %
\\
\hline
$\Dyckt$ & ~~~~3 &
$O(k \log n + \sqrt{n\log n})$ & $O(k \log n)$ & inverse poly & $poly(k \log n)$
 \footnotemark[\value{footnote}]\\
& ~~~~4 & for all $0 < \epsilon < 1$: & & & \\
& & $O(k^{1+ \epsilon} + \sqrt{n\log n})$ & $O(
\log n)$ & 1/8 & $O( (\log n)^{O(1)} + k^{O(1)}) $  
\\
\hline
\end{tabular}
\footnotetext{In the case of $\Dltk{\otDyckl}$, this is
the exact time per item. However, for $\Dltk{\Dyckl}$ it is the time per item on
average. In the latter case, the algorithm first reads a block and then uses
$O(poly(k \log n ))$ time per element of the block. Therefore, the time per block
is $O(poly(k\log n)\sqrt{n/\log n})$. Both algorithms use an extra post-processing
time of $n^{k+O(1)}$.}
\smallskip

In all the algorithms in the table above, we
assume that the length of the input stream is known.

Using %
Algorithm 1, we can deduce the number
of errors as well as their locations. 
Using a combination of the algorithm for membership testing of $\otDyckt$ due to
\cite{MMN09} (which
we refer to as MMN algorithm) and Algorithm 1, it is easy to get a membership
testing algorithm for $\Dltk{\Dyckt}$. However, such an algorithm  uses
$O(k\sqrt{n \log n})$ space. In order to achieve the claimed bound, we modify
their algorithm for testing membership in $\Dyckt$ and use that in conjunction
with Algorithm 1. In our algorithm, we do not need
to store the partial evaluations of polynomials on the stack.  

Algorithms 2 and 4 are inspired by the
communication complexity protocols of Yao \cite{Yao03} and Huang et
al. \cite{HSZZ06}.  A mere combination of their ideas, however, is 
not enough to get the required
bounds. The crucial observation here is that Yao's protocol can be
derandomized by using a combination of small-bias distributions and
distributions that fool DNF formulae. As this requires very few
random bits, we get the desired derandomization. These algorithms are also
better as compared to Algorithm 1 and 3 in terms of their time complexity.  
For Algorithm 2, we first prove that it suffices to give an efficient algorithm
for $\Ham{n,k}$, where $\Ham{n,k}(x,y)$ for $x,y \in \{0,1\}^n$ is $1$ if and
only if the Hamming distance between $x$ and $y$ is at most $k$. %

Finally, we consider the question of optimality. We prove
that any algorithm that makes $k/n$ error requires $\Omega({k\log
(n/k)})$ space to test membership in $\Dltk{\otDyckt}$ by proving a lower bound
on $\Ham{n,k}$. The two problems are related as follow:
Let $w \in
\Sigma^{2n}$ and let $w =uv$ where $u \in \{(,[\}^n$ and $v \in
\{),]\}^n$. If $($ and $)$ are both mapped to
$0$ and $[$ and $]$ are both mapped to $1$ to obtain $x,y$ from $u,v$ then it
is easy to see that 
$uv \in \Dltk{\otDyckt}$ if and only if $\Ham{n,k}(x,y)=1$.

The problem $\Ham{n,k}$ was considered in \cite{Yao03}, \cite{HSZZ06},
in simultaneous message model. In \cite{HSZZ06}, a lower
bound (in fact, a quantum lower bound) of $\Omega(k)$ was proved for the problem. Their lower
bound holds even for constant error protocols. To best of our knowledge no better lower
bound is known for the problem. We improve on their lower bound by a $\log(n/k)$
factor under the assumption that the communication protocol is allowed to make
small error. Our lower bound can be stated as follows:

\begin{theorem}
\label{thm:lb}
Given two strings $x,y \in \{0,1\}^n$ such that either the Hamming
distance between $x,y$ is exactly $k$ or exactly $k+2$, any randomized
one-pass algorithm that makes error $k/n$ requires space $\Omega(k
\log (n/k))$  to decide which one of the two cases is true for the
given $x,y$ pair.
\end{theorem}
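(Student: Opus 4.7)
The plan is to reduce from a block-augmented indexing problem and then lower-bound it by an information-theoretic argument tuned to the small error rate $k/n$. Define the problem $P$: Alice receives $a = (a_1, \ldots, a_k) \in [M]^k$ where $M = \lfloor n/k \rfloor - 1$, and Bob receives $(\ell, b, a_{\ell+1}, \ldots, a_k)$ with $\ell \in [k]$, $b \in [M]$; Bob must decide whether $a_\ell = b$. To reduce $P$ to the Hamming promise problem, partition the $n$ positions into $k$ blocks of $M+1$, reserving one ``padding slot'' in each block. Alice sets $x$ to have a single $1$ at position $a_{\ell'}$ in the non-padding part of block $\ell'$ and $0$ in all padding slots. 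Bob sets $y$ to agree with $x$ on blocks $\ell' > \ell$ (using his knowledge of the suffix), to have a single $1$ at position $b$ in block $\ell$, to be zero on blocks $\ell' < \ell$, and to have $1$ in the padding slot of the first $k - \ell + 1$ blocks. Summing block contributions, the Hamming distance between $x$ and $y$ equals $k$ if $a_\ell = b$ and $k+2$ otherwise, and every pair produced by the reduction lies in the promise set; hence any one-pass algorithm with space $c$ and error $k/n$ yields a one-way protocol for $P$ with the same parameters.

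The lower bound for $P$ is an information-cost argument. Draw $a$ uniformly from $[M]^k$, $\ell$ uniformly from $[k]$, and $b$ equal to $a_\ell$ or to a uniform element of $[M] \setminus \{a_\ell\}$, each with probability $1/2$. Fix public randomness to make the protocol deterministic and let $C = \Pi(a)$ be Alice's message. For each fixed $(\ell, a_{>\ell}, C)$ Bob's decision $h(b)$ is a function of $b$ alone, defining $A = \{b : h(b) = 1\}$, and a direct calculation shows that the single-block balanced error $\epsilon$ forces $\Pr[a_\ell \in A \mid \ell, a_{>\ell}, C] \geq 1 - 2\epsilon$ and $|A| \leq 2(M-1)\epsilon + 1$. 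Since the global error is at most $k/n = O(1/M)$, by Markov these hold with $\epsilon = O(1/M)$ on a constant fraction of instances, so for typical $(\ell, a_{>\ell}, C)$ we have $|A| = O(1)$ and $a_\ell \in A$ with high posterior probability. Bounding the posterior entropy yields $H(a_\ell \mid C, a_{>\ell}) \leq \alpha \log M + O(1)$ for some constant $\alpha < 1$ (using the trivial bound $\log M$ on bad instances). Since the $a_\ell$'s are independent and $C$ does not depend on $\ell$, applying the chain rule in the reverse order gives
\[
|C| \;\geq\; I(a; C) \;=\; \sum_{\ell = 1}^{k} I(a_\ell; C \mid a_{\ell+1}, \ldots, a_k) \;\geq\; k\bigl((1-\alpha)\log M - O(1)\bigr) \;=\; \Omega(k \log(n/k)).
\]

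The main obstacle is the Fano-style single-block analysis. A merely constant error bound would only give $H(a_\ell \mid C, a_{>\ell}) \leq H(\text{const}) \cdot \log M$, which is itself $\Omega(\log M)$ and yields no per-block improvement. The reason the theorem's small-error regime buys the extra $\log(n/k)$ factor is precisely that $k/n \approx 1/M$ is comparable to the trivial error of a single-block protocol: this forces the predicted set $A$ to have \emph{bounded} size rather than merely a small fraction of $[M]$, and it is this strengthening which makes the reverse-chain-rule telescope into a $k \log(n/k)$ bound.
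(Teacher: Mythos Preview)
Your approach is essentially the same as the paper's: both reduce from the augmented-indexing problem over a universe of size $\Theta(n/k)$ with $k$ coordinates, and both encode each coordinate so that equal symbols contribute a fixed even amount to the Hamming distance while a mismatch contributes two more. The paper's encoding uses the three-bit blocks $110$ vs.\ $011$ (so matching symbols still differ in two positions, and $\bot$ becomes all zeros), whereas you use unit vectors together with separate padding bits to balance the count up to exactly $k$; these are interchangeable gadgets.

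The substantive difference is that the paper invokes the Jayram--Woodruff bound for augmented indexing with error $\delta = \Theta(k/n)$ as a black box, obtaining $\Omega(k\log(1/\delta)) = \Omega(k\log(n/k))$ directly, while you re-derive that bound by a Fano-type information-cost argument. Your sketch is correct in outline: the key point---that error $O(1/M)$ forces the acceptance set $A$ to have \emph{bounded} size rather than merely a small fraction of $M$---is exactly what drives the $\log(n/k)$ saving, and the reverse chain rule then telescopes. Two small points to tighten: (i) in the reduction, ``$y$ agrees with $x$ on blocks $\ell'>\ell$'' should be read as agreement on the non-padding part, or else you cannot also set padding bits in those blocks when $\ell$ is small; (ii) the bound $H(a_\ell\mid C,a_{>\ell})\le \alpha\log M + O(1)$ need not hold for every $\ell$ individually---the Markov step is over the triple $(\ell,a_{>\ell},C)$---but it does hold on average over $\ell$, which is all the chain-rule sum requires. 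With those clarifications your argument goes through and is a self-contained alternative to citing \cite{JW11}.
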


For the lower bound, we use the result of Jayram et al. \cite{JW11}. Intuitively, the
hardest case seems to be to distinguish between exactly $k$ and
exactly $k+2$ errors. The main advantage of our lower bound proof is that it
formalizes this intuition. Moreover, as our algorithm in
Section 3 shows, this bound is tight up to a constant factor for
$n\geq k^2$ (indeed, for $n\geq k^{1+\epsilon}$ for any $\epsilon >
0$).  This bound is not tight in all situations though, for example
when $n\gg k$ but the error is constant. Also, it does not apply to
multi-pass algorithms.  However, this is better than the earlier
bounds \cite{HSZZ06} by a factor of $\log(n/k)$ for small error. %

The rest of the paper is organized as follows: in the next section we give some
basic definitions which will be used later in the paper. In Section
\ref{section_1turn} we give the two randomized one-pass algorithms for testing
membership in $\Dltk{\otDyckt}$. In \ref{dycktsection} we discuss our results
regarding testing membership in $\Dltk{\Dyckt}$. Our lower bound result is
presented in Section \ref{lbsection}.

\section{Definitions and Preliminaries}
\label{section_def}
\newcommand{\poly}{\mathrm{poly}}

\subsection{$\ell$-wise independent hash families}

\begin{definition}
	Given positive integers $\ell,n,m,$ and $s$, a function $F:\{0,1\}^s
	\rightarrow [m]^n$ is an $\ell$-wise independent hash family if
	given any \emph{distinct} $i_1,i_2,\ldots,i_\ell\in [n]$ and
	any (not necessarily distinct) $j_1,j_2,\ldots,j_\ell\in [m]$, we
	have
	\[
	\prob{r\in \{0,1\}^s}{F(r)(i_1) = j_1 \wedge F(r)(i_2) = j_2 \wedge
	\cdots \wedge F(r)(i_\ell) = j_\ell} = \frac{1}{m^\ell}
	\]
	where $F(r)$ is interpreted as a function mapping $[n]$ to $[m]$ in
	the obvious way.
\end{definition}

\begin{lemma}\cite{Vadhan}
	\label{lemma_l_wise}
	For any $\ell,n,m$, there is an $\ell$-wise independent hash family
	$F:\{0,1\}^s\rightarrow [m]^n$, with $s = O(\ell\log(n+m))$ with the
	property that there is a deterministic algorithm which, on input
	$r\in \{0,1\}^s$ and $i\in [n]$, computes $F(r)(i)$ in
	time $\poly(s)$ using space $O(s)$.
\end{lemma}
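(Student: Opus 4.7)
The plan is to prove this standard lemma via the classical Joffe / Wegman--Carter polynomial-hashing construction of $\ell$-wise independent families. Fix a prime power $q$ with $\max(n,m) \leq q \leq 2\max(n,m)$; such a $q$ can be taken to be a power of $2$ (rounding $m$ up to the nearest power of $2$, say, so that the eventual range is exactly a subset of $\mathbb{F}_q$), so that $\log q = O(\log(n+m))$. Identify $[n]$ with an arbitrary subset of $\mathbb{F}_q$ and $[m]$ with the set of bit-strings of length $\log m$, which we embed into $\mathbb{F}_q$ via the natural bit-representation of field elements (using any fixed basis of $\mathbb{F}_q$ over $\mathbb{F}_2$).

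The seed is parsed as follows. Set $s = \ell \cdot \log q = O(\ell \log(n+m))$, and interpret $r \in \{0,1\}^s$ as $\ell$ coefficients $a_0, a_1, \ldots, a_{\ell-1} \in \mathbb{F}_q$, which in turn define the polynomial $p_r(x) = \sum_{j=0}^{\ell-1} a_j x^j$. Define $F(r)(i)$ to be the length-$\log m$ prefix (in the chosen bit-representation) of $p_r(i) \in \mathbb{F}_q$. Since $m$ is a power of $2$ dividing $q$, truncation to $\log m$ bits is a surjective $q/m$-to-$1$ map from $\mathbb{F}_q$ onto $[m]$.

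To verify $\ell$-wise independence, fix distinct $i_1, \ldots, i_\ell \in [n] \subseteq \mathbb{F}_q$ and any $j_1, \ldots, j_\ell \in [m]$. By Lagrange interpolation, the evaluation map $p \mapsto (p(i_1), \ldots, p(i_\ell))$ is a bijection from polynomials of degree $<\ell$ over $\mathbb{F}_q$ to $\mathbb{F}_q^\ell$. Hence $(p_r(i_1), \ldots, p_r(i_\ell))$ is uniformly distributed on $\mathbb{F}_q^\ell$ when $r$ is uniform; composing with the $q/m$-to-$1$ truncation makes $(F(r)(i_1),\ldots,F(r)(i_\ell))$ uniform on $[m]^\ell$, giving probability exactly $1/m^\ell$ of any target tuple.

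For the algorithmic part, evaluating $p_r(i)$ reduces to $\ell$ additions and $\ell$ multiplications in $\mathbb{F}_q$ via Horner's rule; field arithmetic in $\mathbb{F}_{2^{O(\log(n+m))}}$ runs in $\mathrm{poly}(\log q) = \mathrm{poly}(s)$ time and $O(\log q) = O(s)$ space, and the algorithm only needs to stream through the $\ell$ coefficients of $r$ once, keeping a single accumulator in $\mathbb{F}_q$; truncation to $\log m$ bits is free. The main obstacle is merely the cosmetic one of ensuring an \emph{exact} $1/m^\ell$ probability for arbitrary $m$: that is handled here by rounding $m$ up to a power of $2$ (and $q$ in turn), which loses at most a factor of $2$ in each of $m$ and $q$ and thus does not affect the asymptotic seed length.
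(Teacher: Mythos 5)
The paper does not actually prove Lemma~\ref{lemma_l_wise}; it is cited from Vadhan's monograph without proof, so there is no paper proof to compare against. Your argument is the standard polynomial-evaluation (Lagrange-interpolation) construction over $\mathbb{F}_q$, which is precisely the construction that reference gives, and your seed-length accounting ($s = \ell \log q = O(\ell\log(n+m))$), the $\ell$-wise independence argument via the bijection $p\mapsto(p(i_1),\ldots,p(i_\ell))$, and the $\mathrm{poly}(s)$-time, $O(\log q)$-workspace evaluation via Horner's rule are all correct.

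One caveat worth flagging: your final-paragraph fix of ``rounding $m$ up to the nearest power of $2$'' silently replaces the codomain $[m]^n$ by $[m']^n$ with $m \le m' \le 2m$, so the lemma is not established verbatim for the original $m$ --- the truncation-to-$\log m'$-bits map is $q/m'$-to-$1$ onto $[m']$, not onto $[m]$, and there is no canonical map $[m']\to[m]$ preserving exact $\ell$-wise independence when $m \nmid m'$. This is not a real defect of your argument: Vadhan's own statement of the construction uses bit-string ranges $\{0,1\}^b$ (i.e.\ $m$ a power of $2$), and every use of Lemma~\ref{lemma_l_wise} in this paper (e.g.\ hashing into $16k^2/\gamma$ or $k^{1+\delta}$ buckets in Lemmas~\ref{lemma_inner_1} and \ref{lemma_inner_2}) is insensitive to enlarging the number of buckets by a factor of $2$. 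So the mismatch is between the paper's slightly over-general phrasing (``for any $\ell,n,m$'' with range exactly $[m]$) and the cited source, not a gap in your reasoning; it would nonetheless be cleaner to state explicitly that you prove the lemma with $m$ replaced by $2^{\lceil\log m\rceil}$ and note that this suffices for all downstream uses.
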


\subsection{Some pseudorandom distributions for restricted tests}

Given $m\in \mathbb{N}$, we will denote by $\mathcal{U}_m$ the uniform
distribution on $\{0,1\}^m$.

\begin{definition}
Given any class $\mathcal{F}$ of boolean functions defined on
$\{0,1\}^m$, distributions $D_1,D_2$ over $\{0,1\}^m$ and $\delta\in
[0,1]$, we say that \emph{$D_1$ $\delta$-fools $\mathcal{F}$ w.r.t.
$D_2$} if for all $f\in\mathcal{F}$, we have
\[
\left|\prob{z\sim D_1}{f(z) = 1} - \prob{z\sim D_2}{f(z)=1}\right|\leq
\delta
\]
\end{definition}

The above concept has been widely studied for many classes of
functions, especially in the case when $D_2$ is the uniform
distribution $\mathcal{U}_m$. When $D_2$ is the uniform
distribution, in many cases, we know of distributions $D_1$ with very
small support that are nonetheless able to fool some interesting class of functions
$\mathcal{F}$ w.r.t. $D_2$. We note two such results below, since we
will need them later.

\begin{definition}
	Given two vectors $x,y\in\{0,1\}^m$, denote by $\langle x,y\rangle$
	the $F_2$-inner product between $x$ and $y$: that is, $\langle
	x,y\rangle = \bigoplus_i x_iy_i$. For $w\in \{0,1\}^m$, define the
	function $L_w:\{0,1\}^m\rightarrow\{0,1\}$ as follows: $L_w(x) =
	\langle w,x\rangle$. The class of \emph{Linear tests on
	$\{0,1\}^m$} is defined to be the class of functions $\{L_w\ \mid\
	w\in\{0,1\}^m\}$.
\end{definition}

\begin{lemma}[Small bias spaces]\cite{AGHP}
	\label{lemma_small_bias}
	Given any $\delta\in\mathbb{R}^{>0}$ and $m\in\mathbb{N}$, there exists
	an explicit function $G_1:\{0,1\}^s\rightarrow\{0,1\}^m$ for $s =
	O(\log(m/\delta))$ such that the distribution $G_1(r)$ for a randomly
	chosen $r\in\{0,1\}^s$ $\delta$-fools the class of Linear Tests
	w.r.t. the uniform distribution $\mathcal{U}_m$. Moreover, there is
	a deterministic algorithm $\mathcal{A}$ that, given $r\in\{0,1\}^s$
	and $i\in[m]$, computes the $i$th bit of $G_1(r)$ in time $\poly(s)$
	and space $O(s)$.
\end{lemma}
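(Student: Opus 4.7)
The plan is to implement the \emph{powering construction} of \cite{AGHP}. First I will set $t = \lceil \log_2(m/\delta) \rceil + 1$, so that $m/2^t \leq 2\delta$, and identify the seed space $\{0,1\}^s$ with $\F_{2^t} \times \F_{2^t}$, giving $s = 2t = O(\log(m/\delta))$. I fix once and for all an irreducible polynomial $p(Z) \in \F_2[Z]$ of degree $t$ and the associated basis isomorphism $\phi : \F_{2^t} \to \{0,1\}^t$ sending $a_0 + a_1 Z + \cdots + a_{t-1}Z^{t-1}$ to $(a_0,\ldots,a_{t-1})$, and fix an injection of $[m]$ into $\{1,\ldots,2^t-1\}$ (so that the exponents $i$ used below are nonzero). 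The generator is then defined by $G_1(x,y)_i := \langle \phi(x^i), \phi(y) \rangle$, the $\F_2$-inner product of the $t$-bit representations of $x^i$ and $y$.

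The key step will be the bias analysis. For $w \in \{0,1\}^m\setminus\{0\}$, let $S = \{i : w_i = 1\}$; by bilinearity of the inner product,
\[
 L_w(G_1(x,y)) = \bigoplus_{i \in S} \langle \phi(x^i), \phi(y) \rangle = \langle \phi(P(x)), \phi(y)\rangle,
\]
where $P(X) = \sum_{i \in S} X^i \in \F_{2^t}[X]$ is a nonzero polynomial of degree at most $m$. Conditioning on $x$: if $P(x) \neq 0$ then $\phi(P(x))$ is a nonzero vector in $\{0,1\}^t$ and its inner product with the uniform $\phi(y)$ is a nontrivial $\F_2$-linear form in $y$, hence uniformly $0/1$; if $P(x) = 0$ the form is identically $0$. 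Since $P$ has at most $m$ roots in $\F_{2^t}$, this yields
\[
 \left|\Pr_r[L_w(G_1(r)) = 1] - \tfrac12\right| = \tfrac12 \Pr_x[P(x) = 0] \leq \frac{m}{2\cdot 2^t} \leq \delta.
\]
For $w = 0$ the test is identically $0$ under both distributions, so the fooling bound is trivial.

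For the efficiency clause, on input $(r,i)$ the algorithm $\mathcal A$ parses $r$ as $(x,y)$, computes $x^i \in \F_{2^t}$ by repeated squaring modulo $p(Z)$ (using $O(\log m)$ multiplications and keeping only a constant number of $t$-bit field elements in memory at a time), and finally outputs $\langle \phi(x^i), \phi(y)\rangle$. Each $\F_{2^t}$-multiplication via schoolbook multiplication followed by reduction mod $p$ runs in $\poly(t)$ time and $O(t)$ space, so the total running time is $\poly(s)$ and the total working space is $O(s)$.

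The one mild obstacle I expect is making all of this arithmetic fit within $O(s)$ \emph{space} (rather than merely $\poly(s)$): this forces in-place schoolbook multiplication and, more importantly, requires the irreducible polynomial $p(Z)$ itself to be obtainable within this budget. The simplest fix is to note that $t = O(\log(m/\delta))$ is small enough that $p(Z)$ may be taken from a hard-coded table, or alternatively produced by the deterministic $\poly(t)$-time, $O(t)$-space irreducible-polynomial construction used in \cite{AGHP}. With $p(Z)$ in hand, the remaining steps are routine.
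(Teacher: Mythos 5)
Your proposal correctly reconstructs the AGHP powering construction, which is precisely what the paper invokes: the paper states the lemma without proof, cites \cite{AGHP}, and remarks that the algorithmic clause ``easily follows from Construction $3$'' (the powering construction you implement). Your bias analysis via the polynomial $P(X)=\sum_{i\in S}X^i$ and the $O(s)$-space repeated-squaring evaluation are both sound and match that reference.
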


The existence of the algorithm $\mathcal{A}$ as stated in Lemma
\ref{lemma_small_bias} is not formally stated in \cite{AGHP} but
easily follows from Construction $3$ of such spaces in the
paper. We call a $G_1$ as described above a \emph{$\delta$-biased
space} over $\{0,1\}^m$. The distribution $G_1(r)$ for a randomly
chosen $r$ is said to be a \emph{$\delta$-biased distribution}.

The second class of tests we will need to fool is the class of
read-once DNF formulae over $\{0,1\}^m$. It has been proved recently
that $\delta'$-biased distributions for small enough $\delta'$ can be
used to $\delta$-fool the class of read-once DNFs w.r.t. the uniform
distribution.

\begin{lemma}[Fooling read-once DNFs]\cite{DETT}
	\label{lemma_read_once}
	Given any $\delta\in \mathbb{R}^{>0}$ and $m\in\mathbb{N}$, any
	$\delta'$-biased distribution $\delta$-fools the class of read-once
	DNFs over $\{0,1\}^m$ w.r.t. the uniform distribution
	$\mathcal{U}_m$, as long as $\delta' \leq
	\frac{1}{m^{O(\log(1/\delta))}}$.
\end{lemma}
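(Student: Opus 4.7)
\noindent The plan is to follow the \emph{sandwiching polynomial} paradigm: for a read-once DNF $f:\{0,1\}^m\to\{0,1\}$, I will construct two multilinear polynomials $p_\ell,p_u:\{0,1\}^m\to\mathbb{R}$ satisfying (i) $p_\ell(x)\le f(x)\le p_u(x)$ pointwise on $\{0,1\}^m$, (ii) $\avg{x\sim\mathcal{U}_m}{p_u(x)-p_\ell(x)}\le\delta/2$, and (iii) both polynomials have Fourier $L_1$-norm at most $W:=m^{O(\log(1/\delta))}$. Granting this, Lemma~\ref{lemma_small_bias} says that a $\delta'$-biased distribution $\delta'$-fools every single linear test over $\{0,1\}^m$; a term-by-term summation over the $\pm 1$-Fourier expansion then shows that such a distribution $W\delta'$-fools any function of Fourier $L_1$-norm at most $W$. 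Chaining with the sandwich,
\[
\avg{x\sim D}{f(x)} \;\le\; \avg{x\sim D}{p_u(x)} \;\le\; \avg{x\sim\mathcal{U}_m}{p_u(x)} + W\delta' \;\le\; \avg{x\sim\mathcal{U}_m}{f(x)} + \delta/2 + W\delta',
\]
and symmetrically from below, so choosing $\delta'\le 1/(2W)=1/m^{O(\log(1/\delta))}$ completes the proof.

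To construct $p_\ell,p_u$, write $f=T_1\vee\cdots\vee T_s$ with each $T_i$ an AND of literals on pairwise disjoint variable blocks, set $p_i=\prob{\mathcal{U}_m}{T_i=1}$, and let $\lambda:=\sum_i p_i$. Consider first the easy case $\lambda\ge C\log(1/\delta)$ for a suitable absolute constant $C$: by independence $\prob{\mathcal{U}_m}{f=0}=\prod_i(1-p_i)\le e^{-\lambda}\le\delta/4$, so $f$ is $(\delta/4)$-close to the constant $1$, and I simply take $p_\ell\equiv 1-\delta/2$ and $p_u\equiv 1$, a trivial sandwich of $L_1$-norm $O(1)$. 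In the main case $\lambda<C\log(1/\delta)$, use the exact identity
\[
f \;=\; 1-\prod_{i=1}^s(1-T_i) \;=\; \sum_{\emptyset\ne I\subseteq[s]}(-1)^{|I|+1}\prod_{i\in I}T_i
\]
and truncate the outer sum to $|I|\le\ell$ for a parameter $\ell=O(\log(1/\delta))$ to be chosen. The Bonferroni inequalities hold pointwise, so truncation at odd (resp.\ even) $\ell$ yields an upper (resp.\ lower) sandwich $p_u$ (resp.\ $p_\ell$) for $f$.

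Properties (ii) and (iii) are verified as follows. Independence of the $T_i$ under $\mathcal{U}_m$ (from disjointness of their variable blocks) gives $\avg{\mathcal{U}_m}{\prod_{i\in I}T_i}=\prod_{i\in I}p_i$, so the expected sandwich gap is bounded by $\sum_{j>\ell}e_j(p_1,\dots,p_s)$, the tail of the elementary symmetric polynomials. By Maclaurin's inequality $e_j(p)\le\lambda^j/j!$, and summing the resulting Poisson-type series with $\lambda\le C\log(1/\delta)$ and $\ell$ a sufficiently large constant multiple of $\log(1/\delta)$ gives tail $\le\delta/2$. For the $L_1$-norm, each surviving summand $\prod_{i\in I}T_i$ is itself a single AND on the union of its disjoint blocks; a direct computation shows that a $k$-literal AND has $\pm 1$-Fourier expansion $2^{-k}\prod_j(1\pm\chi_j)$, with $2^k$ coefficients each of absolute value $2^{-k}$, hence Fourier $L_1$-norm equal to $1$. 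Summing over the $\binom{s}{\le\ell}$ choices of $I$, and using $s\le m$ (since the $T_i$ have pairwise disjoint supports), yields total Fourier $L_1$-norm at most $\binom{s}{\le\ell}\le s^\ell\le m^\ell=m^{O(\log(1/\delta))}$, as required.

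The main obstacle is the quantitative tail estimate for the Bonferroni truncation, which must simultaneously guarantee an expected sandwich gap of $\delta/2$ and an $L_1$-norm of $m^{O(\log(1/\delta))}$. The linchpin is the a~priori reduction $\lambda=\sum p_i\le O(\log(1/\delta))$ (otherwise $f$ is trivially $\delta$-close to the constant $1$), which via Maclaurin's inequality turns the tail into a rapidly convergent Poisson-type sum and allows $\ell=O(\log(1/\delta))$ to suffice \emph{independently of the number of terms $s$}. Once this reduction is in place, the rest of the argument---the pointwise Bonferroni sandwich, per-monomial $L_1$-norm $1$, a union bound over monomials, and the fooling of each individual character by Lemma~\ref{lemma_small_bias}---is a routine verification, and the chaining displayed in the first paragraph delivers the stated bias bound $\delta'\le 1/m^{O(\log(1/\delta))}$.
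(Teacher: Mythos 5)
The paper does not give a proof of Lemma~\ref{lemma_read_once}; it cites it from \cite{DETT}. Your reconstruction follows the same sandwiching-via-Bonferroni strategy as that reference, and the ``main case'' $\lambda < C\log(1/\delta)$ is essentially correct: the pointwise Bonferroni inequalities, the $e_j(p)\le \lambda^j/j!$ tail bound, the observation that each monomial $\prod_{i\in I}T_i$ is a single conjunction on disjoint variables and hence has Fourier $L_1$-norm exactly $1$, and the resulting $\binom{s}{\le\ell}\le m^{O(\log(1/\delta))}$ bound all check out, as does the chaining through Lemma~\ref{lemma_small_bias}.

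There is, however, a genuine gap in the ``easy case'' $\lambda\ge C\log(1/\delta)$. You take $p_\ell\equiv 1-\delta/2$ and $p_u\equiv 1$, but $p_\ell\equiv 1-\delta/2$ is \emph{not} a pointwise lower bound for $f$: whenever $f(x)=0$ (which does happen unless $f\equiv 1$) you have $p_\ell(x)=1-\delta/2 > 0 = f(x)$, violating your requirement~(i). The sandwiching argument needs the pointwise inequality $p_\ell\le f$ precisely so that $\avg{x\sim D}{f(x)}\ge\avg{x\sim D}{p_\ell(x)}$ holds for the \emph{pseudorandom} distribution $D$, and an average-case closeness under $\mathcal{U}_m$ does not give this. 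The standard repair, which is what \cite{DETT} in effect does, is: when $\lambda\ge C\log(1/\delta)$, pick a subset $J\subseteq[s]$ of the terms with $\lambda_J:=\sum_{i\in J}p_i\in[C\log(1/\delta),\,C\log(1/\delta)+1]$ (possible since each $p_i\le 1$), let $f_J=\bigvee_{i\in J}T_i\le f$ pointwise, and apply your main-case Bonferroni construction to $f_J$ (which now has small $\lambda_J$) to obtain a genuine pointwise lower sandwich $p_\ell\le f_J\le f$ of small $L_1$-norm. Pair this with $p_u\equiv 1$; then
\[
\avg{x\sim\mathcal{U}_m}{p_u(x)-p_\ell(x)}\ \le\ \prob{x\sim\mathcal{U}_m}{f_J(x)=0}\ +\ \avg{x\sim\mathcal{U}_m}{f_J(x)-p_\ell(x)}\ \le\ e^{-\lambda_J}+\delta/4\ \le\ \delta/2,
\]
and the rest of your chaining goes through. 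A smaller slip: in the final step you want $\delta'\le\delta/(2W)$ rather than $\delta'\le 1/(2W)$ to make $W\delta'\le\delta/2$; this still sits inside the claimed window $\delta'\le 1/m^{O(\log(1/\delta))}$, so it only affects the hidden constant.
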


By Lemmas \ref{lemma_small_bias} and \ref{lemma_read_once}, we have

\begin{corollary}
	\label{corollary_read_once}
	Given any $\delta\in\mathbb{R}^{>0}$ and $m\in\mathbb{N}$, there exists
	an explicit function $G_2:\{0,1\}^s\rightarrow\{0,1\}^m$ for $s =
	O(\log m\log(1/\delta))$ such that the distribution $G_2(r)$ for a randomly
	chosen $r\in\{0,1\}^s$ $\delta$-fools the class of Linear Tests
	w.r.t. the uniform distribution $\mathcal{U}_m$. Moreover, there is
	a deterministic algorithm that, given $r\in\{0,1\}^s$ and $i\in[m]$, computes the
	$i$th bit of $G_2(r)$ in time $\poly(s)$ and space $O(s)$.
\end{corollary}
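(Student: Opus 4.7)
The plan is to take $G_2$ to be exactly the small-bias generator from Lemma \ref{lemma_small_bias}, but instantiated with a much smaller bias parameter than $\delta$ itself. Specifically, I would set $\delta' := 1/m^{c\log(1/\delta)}$ for a suitable absolute constant $c$ (the one implicit in the exponent of Lemma \ref{lemma_read_once}), and define $G_2 : \{0,1\}^s \to \{0,1\}^m$ to be the generator $G_1$ of Lemma \ref{lemma_small_bias} applied with bias $\delta'$ in place of $\delta$.

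The main step is the seed-length bookkeeping. Lemma \ref{lemma_small_bias} gives $s = O(\log(m/\delta'))$, and by choice of $\delta'$,
\[
\log(1/\delta') \;=\; c\log(1/\delta)\cdot \log m,
\]
so
\[
s \;=\; O(\log m + \log(1/\delta')) \;=\; O(\log m + \log m\log(1/\delta)) \;=\; O(\log m\log(1/\delta)),
\]
which matches the bound claimed in the corollary.

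For the fooling guarantee, observe that by Lemma \ref{lemma_small_bias} the distribution produced by $G_2$ is $\delta'$-biased, i.e.\ it $\delta'$-fools the class of Linear Tests w.r.t.\ $\mathcal{U}_m$. Since $\delta' \leq \delta$ for any reasonable parameter range (certainly whenever $m \geq 2$ and $\delta \leq 1$), the same generator a fortiori $\delta$-fools Linear Tests, giving the required property. The explicit, low-space algorithm for computing the $i$-th bit of $G_2(r)$ is inherited directly from the algorithm $\mathcal{A}$ of Lemma \ref{lemma_small_bias}: running $\mathcal{A}$ on seed $r$ and index $i$ takes time $\poly(s)$ and space $O(s)$, as needed.

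There is essentially no technical obstacle here; the only subtlety is the choice of bias $\delta'$. One might instead pick $\delta' = \delta$ and obtain a stronger generator with shorter seed $O(\log(m/\delta))$, but the reason to work with the smaller $\delta'$ is so that the very same $G_2$ can be reused to fool read-once DNFs via Lemma \ref{lemma_read_once}. The cost of this single construction servicing both tasks is precisely the multiplicative $O(\log m \log(1/\delta))$ seed length that appears in the corollary statement.
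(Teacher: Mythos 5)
Your proposal is correct and matches the paper's intent: the paper's proof of this corollary is implicitly ``instantiate the $\delta'$-biased generator of Lemma \ref{lemma_small_bias} with $\delta'$ small enough that Lemma \ref{lemma_read_once} kicks in,'' and the seed-length bookkeeping you carry out is exactly the required verification. You are also right to flag the oddity in the statement: as printed, the corollary only claims $G_2$ $\delta$-fools Linear Tests, and for that conclusion alone the seed length $O(\log m\log(1/\delta))$ would be an overcount, since Lemma \ref{lemma_small_bias} already achieves it with $O(\log(m/\delta))$. The label, the seed-length bound, and the subsequent use of this corollary to produce the distribution $D'_2$ that $\delta_2$-fools read-once DNFs all indicate that the intended conclusion is that $G_2$ $\delta$-fools \emph{read-once DNFs} (in addition to being $\delta'$-biased, hence also fooling linear tests), which is precisely what your choice $\delta' = 1/m^{c\log(1/\delta)}$ guarantees via Lemma \ref{lemma_read_once}.
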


\section{Equivalence with errors}
\label{section_1turn}
In this section, we consider the problem of testing membership in
$\Dltk{\otDyckl}$.  Magniez et al. \cite{MMN09}, showed that it
suffices to design efficient streaming algorithms for testing
membership in $\Dyckt$ in order to get efficient streaming algorithms
for testing membership in $\Dyckl$. Formally,
\begin{lemma}[\cite{MMN09}]
\label{lem:l-to-2-mmn}
If there is a one-pass streaming algorithm for testing membership in
${\Dyckt}$ that uses space $s(n)$ for inputs of length $n$, 
then there is a one-pass streaming algorithm for testing membership in
${\Dyckl}$ that uses space $O(s(n\log l))$ for inputs of length $n$.
\end{lemma}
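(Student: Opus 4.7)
The plan is to define an encoding $\mathrm{enc} : \Sigma_l^* \to \Sigma_2^*$ that maps each symbol of $\Sigma_l$ to a length-$\lceil \log l \rceil$ block over $\Sigma_2$ and satisfies the equivalence $w \in \Dyckl$ if and only if $\mathrm{enc}(w) \in \Dyckt$; one then simulates the given $\Dyckt$ streaming algorithm on $\mathrm{enc}(w)$, producing the encoded symbols on the fly as the symbols of $w$ arrive.

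For the encoding, label the opening and matching closing symbols of $\Sigma_l$ by $o_1,\dots,o_l$ and $c_1,\dots,c_l$, and let $b^{(i)}_1 \cdots b^{(i)}_{\lceil \log l \rceil}$ denote the binary representation of $i$. Define $\mathrm{enc}(o_i)$ to be the length-$\lceil \log l \rceil$ string whose $j$-th symbol is $($ if $b^{(i)}_j = 0$ and $[$ if $b^{(i)}_j = 1$, and define $\mathrm{enc}(c_i)$ to be the length-$\lceil \log l \rceil$ string whose $j$-th symbol is the $\Dyckt$-matching closing of the $(\lceil \log l \rceil - j + 1)$-th symbol of $\mathrm{enc}(o_i)$. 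Extend $\mathrm{enc}$ to $\Sigma_l^*$ by concatenation.

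Correctness is most easily verified using the stack-based view of $\Dyckt$-membership. In the forward direction, induct on the $\Dyckl$-derivation $S \to \varepsilon \mid SS \mid o_i S c_i$ and observe that $\mathrm{enc}(c_i)$ is exactly the sequence of closings that pops $\mathrm{enc}(o_i)$ off the stack in the correct order, so $\mathrm{enc}(w)$ is accepted by the $\Dyckt$ stack-machine. In the converse direction, if $\mathrm{enc}(w) \in \Dyckt$, process block-by-block: each block of $\lceil \log l \rceil$ symbols is either entirely openings or entirely closings of $\Sigma_2$, so an opening block must come from some $o_i$ and pushes the specific signature $\mathrm{enc}(o_i)$ onto the stack, while a closing block $\mathrm{enc}(c_j)$ can only pop the top of the stack successfully when the top block is $\mathrm{enc}(o_j)$. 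This forces $w$ to be a valid $\Dyckl$ word.

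The streaming simulation is then immediate: as each symbol $w_i$ arrives, compute its $\lceil \log l \rceil$-symbol encoding on the fly (using $O(\log l)$ bits of scratch space) and feed the symbols one-by-one to the simulated $\Dyckt$ algorithm. The total space is $s(n \lceil \log l \rceil) + O(\log l) = O(s(n \log l))$, as claimed. The main subtlety is in the converse direction of the correctness equivalence: one must argue that any $\Dyckt$-matching of $\mathrm{enc}(w)$ aligns with the $\lceil \log l \rceil$-block structure, which follows from the all-openings/all-closings structure of each block together with the uniqueness of the $\Dyckt$-matching.
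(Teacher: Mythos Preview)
Your proof is correct, and it is essentially the standard binary-encoding reduction from $\Dyckl$ to $\Dyckt$. However, the paper does not actually prove this lemma: it is stated with a citation to \cite{MMN09} and no proof is given. So there is no ``paper's own proof'' to compare against here---your argument simply supplies what the paper imports from Magniez, Mathieu, and Nayak.

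One remark worth making, since it clarifies the role of this lemma in the paper: the very next lemma (Lemma~\ref{lm:dyckl-to-dyckt}) proves the analogous statement for the $\Delta^{\leq k}$ versions, and there the authors use a \emph{different} encoding, namely $(_i \mapsto (^{i-1}[(^{l-i}$, of length $l$ rather than $\lceil\log l\rceil$. The reason is that their reduction must control how errors in $\Sigma_l$ translate to errors in $\Sigma_2$: with the binary encoding you use, corrupting a single $\Sigma_l$-symbol can flip up to $\lceil\log l\rceil$ bits of the $\Sigma_2$-encoding, whereas the unary-style encoding guarantees exactly two flips per corruption. Your binary encoding is the right one for the plain (error-free) Lemma~\ref{lem:l-to-2-mmn}, but would not give the claimed bound for Lemma~\ref{lm:dyckl-to-dyckt}.
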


We first prove a lemma similar to Lemma \ref{lem:l-to-2-mmn}, to state that it
suffices to design
an efficient streaming algorithm for $\Dltk{\otDyckt}$ ($\Dltk{\otDyckt}$) in
order to get an efficient streaming algorithms for $\Dltk{\otDyckl}$
(respectively, $\Dltk{\Dyckl})$. %

\begin{lemma}
\label{lm:dyckl-to-dyckt}%
If there is a one-pass streaming algorithm for testing membership in
$\Delta^{\leq 2k}({\otDyckt})$ ($\Delta^{\leq 2k}({\Dyckt})$) that uses space
$s(n)$ for inputs of length $n$, then there is a streaming algorithm for testing membership in
$\Dltk{\otDyckl}$ ($\Dltk{\Dyckl}$) that uses space $O(s(nl))$ for inputs of
length $n$.
\end{lemma}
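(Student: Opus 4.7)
The plan is to define a blockwise encoding $\phi:\Sigma_l^* \to \Sigma_2^*$ that embeds $\Dyckl$ into $\Dyckt$ (and $\otDyckl$ into $\otDyckt$) while preserving Hamming distances up to a factor of $2$, and then simulate the given inner streaming algorithm for $\Delta^{\leq 2k}(\Dyckt)$ (resp.\ $\Delta^{\leq 2k}(\otDyckt)$) on the on-the-fly encoded stream. The encoding I have in mind is the ``one-special-per-block'' map: the opening bracket of type $i\in[l]$ is sent to the length-$l$ $\Sigma_2$-block $(^{i-1}\,[\,(^{l-i}$ and the closing bracket of type $i$ to $)^{l-i}\,]\,)^{i-1}$, extended to strings by concatenation. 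The streaming reduction reads each input symbol $x_t\in\Sigma_l$, emits the $l$ symbols of $\phi(x_t)$ one by one into the inner algorithm, and outputs the inner algorithm's verdict; the per-symbol encoding needs only $O(\log l)$ bits of state, so the total space is $O(s(nl))$ as claimed.

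Two structural facts to verify first. (a) $\phi$ embeds $\Dyckl$ into $\Dyckt$ (and $\otDyckl$ into $\otDyckt$): the first containment follows by induction on nesting depth, since for any matched pair of brackets of type $i$ the concatenation of their $\phi$-images is the length-$2l$ Dyck word $(^{i-1}[(^{l-i})^{l-i}])^{i-1}\in\Dyckt$; the second follows from the identity $\phi(w\,\overline{w}^R)=\phi(w)\,\overline{\phi(w)}^R$, which is a routine block-by-block check. (b) A local distance-doubling property: for any two same-kind brackets $a,b\in\Sigma_l$, the block Hamming distance $d_H(\phi(a),\phi(b))$ equals $0$ if $a=b$ and $2$ otherwise, because the unique special symbol ($[$ or $]$) simply moves to a different position inside the length-$l$ block.

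The heart of the reduction is then the global identity $d_H(\phi(x),\Dyckt)=2\cdot d_H(x,\Dyckl)$ (and analogously $d_H(\phi(x),\otDyckt)=2\cdot d_H(x,\otDyckl)$) under the kind-preserving corruption model, which immediately gives $x\in\Dltk{\Dyckl}\iff \phi(x)\in\Delta^{\leq 2k}(\Dyckt)$ (and similarly in the one-turn case). Since corruption preserves kinds, we may assume the opening/closing skeleton of $x$ is $\Dyckt$-balanced (the kind-skeleton of $\phi(x)$ is the $l$-fold blowup of that of $x$ and is balanced iff $x$'s skeleton is), for otherwise both sides of the equivalence are trivially infeasible. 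Under balance, the $\Dyckt$-matching of $\phi(x)$ decomposes cleanly into per-block pairings: each opening block $\phi(x_p)$ is matched with the closing block $\phi(x_q)$ where $(p,q)$ is a matched pair in $x$'s own matching, and the $a$th position of $\phi(x_p)$ is paired with the $(l-a+1)$st position of $\phi(x_q)$. Within a block-pair arising from open type $i$ and close type $j$, a brief case analysis shows that the minimum cost of repairing that block-pair in $\phi(x)$ to a valid $\Dyckt$-pair is $0$ when $i=j$ and exactly $2$ when $i\neq j$ (realized e.g.\ by relabelling either $\phi(x_p)$ or $\phi(x_q)$ to agree with the other); summing over all block-pairs yields the doubling identity. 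The main technical point I expect to need care on is this per-block decomposition of the $\Dyckt$-matching of $\phi(x)$, which I would justify by a short induction on nesting depth using that $\phi$ sends each $\Sigma_l$-opening (resp.\ closing) to a uniform block of $\Sigma_2$-openings (resp.\ closings), so matched brackets in $\phi(x)$ cannot straddle block boundaries in a non-aligned way.
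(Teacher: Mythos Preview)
Your proposal is correct and uses precisely the same encoding as the paper: open type $i \mapsto (^{i-1}[(^{l-i}$ and close type $i \mapsto )^{l-i}])^{i-1}$, with the observation that each mismatch in $w$ becomes exactly two mismatches in the encoded string. The paper's own proof is a terse two-line sketch asserting the distance-doubling property without justification, whereas you actually work out the block-pair decomposition of the $\Dyckt$-matching and the per-block case analysis---so your argument is the same approach, just carried out with the rigor the paper omits.
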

\begin{proof}
We use a distance preserving code for this. We encode an opening parenthesis of type $(_i$
by $(^{i-1}~[~(^{l-i}$. And we encode a closing parenthesis of type $)_i$
by $)^{l-i}~]~)^{i-1}$. Now given a string $w \in \Sigma_l^n$, the new string $w'$ is over the alphabet $\Sigma=
\{(,[,),]\}$. And $|w'|=nl$. Also for every mis-match in $w$, $w'$ has two mis-matches. 
Thus the lemma.
\qed\end{proof}

Let $D^{\leq k}(\otDyckt)$ be the set of
string obtained by changing at most $k$ symbols of words in $\otDyckt$. 
Assuming that the length of the string is known,
the membership testing for $D^{\leq k}(\otDyckt)$ (which is more general than
$\Dltk{\otDyckt}$) can also be handled by the
techniques introduced in the paper. If the input string has opening parenthesis
in the first half of the string, then it is considered to be an error. It is
easy to keep track of such errors.

We now note that $\Dltk{\otDyckt}$ on inputs of length $n$ 
reduces to the problem $\Ham{n/2,k}$. 
\begin{lemma}
	\label{lemma_dyck_ham}
	There is a deterministic one-pass streaming algorithm that uses space
	$O(\log n)$ and time $O(n)$, which given as input a string $w\in\{(,[,),]\}^n$,
	outputs a pair of strings $x,y\in\{0,1\}^{n/2}$ and either accepts
	or rejects. If the algorithm rejects, we have
	$w\not\in\Dltk{\otDyckt}$. Otherwise, we have $\Delta(x,y^R)\leq k$
	iff $w\in\Dltk{\otDyckt}$.
\end{lemma}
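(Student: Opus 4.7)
The plan is to observe that corruption (as defined in the paper) never converts an opening parenthesis into a closing one or vice versa, so strings in $\Delta^{\leq k}(\otDyckt)$ retain the ``opening-prefix, closing-suffix'' skeleton of $\otDyckt$. This immediately suggests a trivial one-pass scan that checks the skeleton and produces $x$, $y$ on the fly by projecting the two halves onto $\{0,1\}$.

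More precisely, the algorithm keeps only a position counter (which needs $O(\log n)$ bits). It reads $w$ symbol by symbol. If $n$ is odd, reject. For the first $n/2$ symbols it checks that each symbol lies in $\{(,[\}$ — rejecting otherwise — and streams out $x$ by mapping $( \mapsto 0$ and $[ \mapsto 1$. For the last $n/2$ symbols it checks that each symbol lies in $\{),]\}$ — rejecting otherwise — and streams out $y$ by mapping $) \mapsto 0$ and $] \mapsto 1$. Every symbol is processed in constant time, so the total running time is $O(n)$.

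Correctness of the ``reject'' case is clear: any $w' \in \otDyckt$ has all openings in the first half and all closings in the second half, and a corruption leaves each position in the same open/close class, so any rejected $w$ is not even in $\Delta^{\leq k}(\otDyckt)$ (for any $k$). For the ``accept'' case, write $w = uv$ with $u \in \{(,[\}^{n/2}$ and $v \in \{),]\}^{n/2}$, and let $x,y$ be their binary encodings. A word $w' \in \otDyckt$ has the form $w' = u' \overline{u'}^R$ with $u' \in \{(,[\}^{n/2}$; encode $u'$ as $x' \in \{0,1\}^{n/2}$. Because the encoding is a bijection on each half, the number of corruptions needed to reach $w'$ from $w$ is exactly $\Delta(x,x') + \Delta(y, x'^{R}) = \Delta(x,x') + \Delta(y^R, x')$.

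The remaining step is to minimize this sum over $x' \in \{0,1\}^{n/2}$. Since each coordinate contributes independently, the optimal choice sets $x'_i$ to agree with $x_i$ or $y^R_i$ whenever they agree (cost $0$) and either one when they disagree (cost $1$). Hence the minimum is exactly $\Delta(x, y^R)$, proving that an accepted $w$ lies in $\Delta^{\leq k}(\otDyckt)$ iff $\Delta(x, y^R) \leq k$. The only mild subtlety, and the one worth stating explicitly, is this last min-over-$x'$ argument; everything else is direct bookkeeping.
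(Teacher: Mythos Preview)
Your proposal is correct and follows essentially the same approach as the paper: the paper's proof describes exactly the same one-pass scan and binary encoding, and simply asserts that in the accept case $\Delta(x,y^R)$ equals the distance of $w$ from $\otDyckt$. Your explicit minimization over $x'$ is a welcome elaboration of that assertion, but the underlying argument is identical.
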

\begin{proof}
	Given an input of length $n$, the algorithm scans its input from left to
	right, and outputs $0$ on seeing $``("$ and $1$ on seeing $``["$ in
	the first $n/2$ symbols of its input $w$; similarly, on the second
	half of $w$, the algorithm outputs $0$ on seeing $``)"$ and $1$ on
	seeing $``]"$. The algorithm rejects either if it sees the closing
	braces in the first half of its input or the opening braces in the
	second half of its input (in this case, an opening brace has been
	corrupted by a closing brace or vice versa) and accepts otherwise.
	If the algorithm accepts, we see that $\Delta(x,y^R)$ is exactly the
	distance of the input from a string in $\otDyckt$. The lemma
	follows.
\qed\end{proof}

The above lemma shows that it suffices to come up with a streaming
algorithm for the Hamming distance problem to solve the problem
$\Dltk{\otDyckt}$. Once we have such an algorithm, we simply run the
above reduction on an input $w\in\{(,[,),]\}^n$, and obtain strings
$x,y^R$, which we feed in as input to the algorithm for
$\Ham{n/2,k}$ (of course, if the reduction rejects, we reject the
input). Though $\Ham{n,k}$ is only a minor restatement of
$\Dltk{\otDyckt}$, we prefer to work with this problem because of its
cleaner definition.

\begin{theorem}
\label{thm:ham}
For any $k$ and any constant $c>0$, there is a one-pass
randomized streaming algorithm which, when given as input strings
$(x,y^R) \in \{0,1\}^n\times \{0,1\}^n$, that accepts with probability
$1$ if $\Delta(x,y) \leq k$ and rejects with probability $1 - 1/n^c$
if $\Delta(x,y)>k$.  The algorithm also detects the locations where
$x$ and $y$ differ with probability at least $1 - 1/n^c$ if
$\Delta(x,y)\leq k$.  The algorithm uses $O(k\log n)$ space and $O(k \log
n)$ randomness. The time required by the algorithm is $poly(k\log n)$ per item
plus $n^{k+O(1)}$ for post-processing.
\end{theorem}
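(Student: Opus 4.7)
The plan is to reduce the task to recovering the sparse difference vector $d \in \{-1, 0, 1\}^n$ with $d_i = x_i - y_i$, whose support is the set of disagreement positions and whose sparsity equals $\Delta(x,y)$. The key observation is that a $k$-sparse vector over a field of size larger than $n$ is uniquely determined by its first $2k+1$ power sums $P_j = \sum_{i=1}^n d_i\, i^j$, and these power sums can be maintained in the stream. To handle the bad case where $d$ is not $k$-sparse but some spurious $k$-sparse $d'$ happens to be consistent with the power sums, I additionally maintain a single polynomial-evaluation fingerprint $F = \sum_i d_i\, \alpha^i$ at a uniformly random $\alpha \in \mathbb{F}_p$; if the recovered $d' \ne d$, then $\sum_i (d_i - d'_i)\alpha^i$ is a nonzero polynomial in $\alpha$ of degree at most $n$, so $F$ catches the discrepancy with probability at least $1 - n/p$.

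Concretely, fix a prime $p$ with $n^{c+2} \le p \le 2n^{c+2}$ and work throughout in $\mathbb{F}_p$. Using a stream-position counter, at time $i \le n$ the incoming symbol $x_i$ contributes $+x_i\, i^j$ to $P_j$ and $+x_i\, \alpha^i$ to $F$, while at time $n+s$ the incoming symbol $y_{n-s+1}$ contributes $-y_{n-s+1}(n-s+1)^j$ to $P_j$ and $-y_{n-s+1}\alpha^{n-s+1}$ to $F$; the powers $i^j$ for $j \le 2k$ and the powers $\alpha^i$ are updated incrementally, giving $\mathrm{poly}(k, \log n)$ time per symbol. For post-processing, enumerate every subset $S \subseteq [n]$ with $|S| \le k$, solve the Vandermonde-type system $P_j = \sum_{i \in S} c_i\, i^j$ (for $j = 0, \ldots, 2k$) over $\mathbb{F}_p$, and if a solution exists set $d'_i = c_i$ on $S$ and zero elsewhere, then check $F = \sum_{i \in S} c_i\, \alpha^i$; accept and output $\mathrm{supp}(d')$ iff the test succeeds for some $S$. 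Enumerating $O(n^k)$ sets at $\mathrm{poly}(k, \log n)$ each yields the claimed $n^{k + O(1)}$ post-processing bound.

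The main technical step is a Vandermonde uniqueness claim: any two $k$-sparse vectors agreeing on $P_0, \ldots, P_{2k}$ differ by a vector $v$ of sparsity at most $2k$ whose first $2k+1$ power sums vanish; restricting to $T = \mathrm{supp}(v)$, the first $|T|$ of these equations form a square Vandermonde system with distinct entries from $[n] \subseteq \mathbb{F}_p$, hence invertible, so $v = 0$. Thus the post-processing produces at most one candidate $d'$. If $\Delta(x, y) \le k$, the true support $S = \mathrm{supp}(d)$ is tested, the system recovers $c_i = d_i$, verification passes deterministically, and the algorithm outputs $\mathrm{supp}(d)$ with probability $1$. If $\Delta(x, y) > k$, any recovered $d'$ satisfies $d' \ne d$, so verification fails except with probability at most $n/p \le 1/n^{c+1} \le 1/n^c$. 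Total space is $O(k)$ field elements of $O(\log n)$ bits each (together with an $O(\log n)$-bit position counter), i.e.\ $O(k \log n)$; randomness is $O(\log p) = O(\log n)$ bits for $\alpha$, comfortably within the $O(k \log n)$ budget, with the slack available for independent verification repeats if a smaller error constant is desired.
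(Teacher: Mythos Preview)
Your proof is correct, but it follows a genuinely different route from the paper's. The paper works over a single large field $\F_{2^\ell}$ with $\ell = \Theta(k\log n)$: it evaluates the boolean polynomial $q_{x,y}(z)=p_x(z)+p_y(z)$ at one random point $\alpha$, then union-bounds over all $\leq n^k$ weight-$\leq k$ boolean polynomials $p$ that might collide with $q_{x,y}$ at $\alpha$; the field is large precisely to absorb that $n^k$ factor. You instead work over a small field $\F_p$ with $p=n^{O(1)}$ and maintain $2k{+}1$ deterministic power sums (essentially Reed--Solomon/Prony syndromes), so that the Vandermonde uniqueness argument leaves at most \emph{one} $k$-sparse candidate $d'$; a single small-field fingerprint then verifies it against the true $d$ with no union bound needed. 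What your approach buys is strictly less randomness --- $O(\log n)$ bits rather than the paper's $O(k\log n)$ --- and deterministic (probability-$1$) recovery of the error locations when $\Delta(x,y)\le k$, whereas the paper only guarantees this with probability $1-1/n^c$. What the paper's approach buys is simplicity: one field element of state and one idea (Schwartz--Zippel plus union bound), with no linear-algebraic recovery step.
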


\begin{proof}
The algorithm uses a fingerprinting strategy and is directly inspired
by the standard randomized communication complexity protocol for the
Equality problem (see \cite{KN06}, for example). Fix a field
$\F_{2^\ell}$, where the exact value of $\ell$ will be determined
later. We call a polynomial $p(z)\in \F_{2^\ell}[z]$ \emph{boolean} if
all of its coefficients are $0$ or $1$. The \emph{weight} of a boolean
polynomial $p$ will be the number of non-zero coefficients of $p$.

We think of $w\in\{0,1\}^n$ as defining a boolean polynomial
$p_w(z)\in\F_{2^\ell}[z]$ as follows: $p_w(z) = \sum_{i=1}^n w_i
z^{i-1}$, where $w_i$ denotes the $i$th bit of $w$. Note that the
polynomial $q_{x,y}(z) := p_x(z) + p_y(z)$ is a boolean polynomial of
weight exactly $\Delta(x,y)$. We check that $\Delta(x,y)\leq k$ by
evaluating $q_{x,y}(z)$ at a random $\alpha\in\F_{2^\ell}$. More
formally, the algorithm is:
\begin{itemize}
	\item Pick $\alpha\in\F_{2^\ell}$ uniformly at random.
	\item Check if $q_{x,y}(\alpha) = p(\alpha)$ for any boolean
		polynomial $p$ of degree less than $n$ and weight at most $k$. If
		not, REJECT.  
	\item If the above does hold for \emph{some} boolean polynomial of weight at
		most $k$, ACCEPT and pick \emph{any} such polynomial $p(z) =
		\sum_{i}p_i z^i$. Let $S =\{i\ \mid\ p_i\neq 0\}$ be the support
		of $p$.  Output $S$ as the estimate of points where $x$ and $y$
		differ.
\end{itemize}

Let us first establish the correctness of the above algorithm
(assuming $\ell$ is large enough). Clearly, if $\Delta(x,y)\leq k$,
then $q_{x,y}(z)$ is a polynomial of weight at most $k$ and the
algorithm always accepts. The algorithm can only err if: (a)
$\Delta(x,y) > k$ or (b) $\Delta(x,y) \leq k$ but the algorithm
outputs the wrong set of indices as its estimate of where $x$ and $y$
differ. In either case, there is a boolean polynomial $p(z)$ of degree
less than $n$ and weight at most $k$ such that $q_{x,y}(z)\neq p(z)$
but $q_{x,y}(\alpha) = p(\alpha)$. For any fixed polynomial $p(z)$,
this happens with probability at most $n/2^\ell$ by the
Schwartz-Zippel Lemma. Since the number of polynomials of weight at
most $k$ is at most $n^k$, the probability that there exists
\emph{any} such polynomial $p$ is bounded by $n^{k+1}/2^\ell$.
Choosing $\ell = O(k\log n)$, we can reduce this error to $1/n^c$ as
claimed. 

Computing $q_{x,y}(\alpha)$ can easily be done in a one-pass fashion
using space $O(\ell) = O(k\log n)$ and time $poly(k \log n)$ per item. 
After reading the stream, we need to cycle through the $n^k$ boolean 
polynomials $p$ of weight at most $k$ and compute the values they take
at input $\alpha\in\F_{2^\ell}$, which can also be done in space
$O(k\log n + \ell) = O(k\log n)$ and time $n^k\poly(k\log n) =
n^{k+O(1)}$, as claimed above. This completes the proof of the theorem.
\qed\end{proof}

\subsection{A randomness-efficient streaming algorithm for Hamming Distance}

Above, we showed that $\Ham{n,k}$ can be computed using space
$O(k\log n)$ and $O(k\log n)$ random bits. Are these parameters
optimal? As we will show later in Section \ref{lbsection}, the bound
on space is nearly optimal. However, we show in this section that the
number of random bits can be significantly reduced, if one is willing
to use a small amount of additional space. The ideas in this section
go back to the results of Yao \cite{Yao03} and Huang et
al. \cite{HSZZ06}, who designed efficient randomized communication
complexity protocols for the two-party problem of checking if the
Hamming Distance between $x$ and $y$ is at most $k$.

Let $\PHam{n,k,l}: \{0,1\}^n\times \{0,1\}^n \rightarrow \{0,1\}$ be a
partial function, defined as follows: 
On input $(x,y^R)$ it evaluates to $0$ if the hamming distance between $x$ and $y$ is greater than
or equal to $l$, it evaluates to $1$ if the distance is less than or equal to $k$ and 
is not defined on other inputs.

\begin{theorem}
	\label{thm_ham_rand}
For every constant $0 \leq \epsilon \leq 1$ there is a randomized
one-pass streaming algorithm that computes $\Ham{n,k}$ using
$O(k^{1+\epsilon}+\log n)$ space and $O(\log n)$ randomness and errs
with probability bounded by $1/8$. The time taken by the algorithm is
$O(\log n)^{O(1)} + k^{O(1)})$ per item.
\end{theorem}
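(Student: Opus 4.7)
The plan is to build a streaming version of the 2-party communication complexity protocols of Yao and Huang et al.\ for $\Ham{n,k}$, where all randomized choices are replaced by pseudorandom ones drawn from the distributions of Section~\ref{section_def}. First I would solve the gap version $\PHam{n,k,ck}$ for a suitable constant $c>1$: pick a pseudorandom subset $S\subseteq[n]$ in which each coordinate appears with probability $1/k$, and test whether $x|_S = y|_S$ using a standard Reed--Solomon equality fingerprint over $\F_{2^{O(\log n)}}$. Since $\Pr[x|_S = y|_S] = (1-1/k)^{\Delta(x,y)}$, this quantity is at least $e^{-1}$ when $\Delta(x,y)\leq k$ and at most $e^{-c}$ when $\Delta(x,y)\geq ck$, so a constant number of independent trials followed by a majority vote separates the two cases.

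The key derandomization is the choice of $S$. Represent each coordinate by a block of $\log k$ seed bits and declare $i\in S$ iff that block is all-zero; then ``$x|_S=y|_S$'' depends on the seed only through the event $T\cap S=\emptyset$, where $T\subseteq[n]$ is the set of positions on which $x$ and $y$ disagree. Viewed as a function of the seed, the complement of this event is a read-once DNF of width $\log k$, so Corollary~\ref{corollary_read_once} gives an $O(\log n)$-bit small-bias seed that $\delta$-fools it for $\delta = 1/\poly(n)$, and each bit of the pseudorandom string can be computed on demand in $O(\log n)$ space. Together with $O(\log n)$ random bits for the Reed--Solomon fingerprint, each trial uses $O(\log n)$ space and $O(\log n)$ random bits, and the whole gap test fits in these bounds.

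If the gap test accepts then $\Delta(x,y)\leq ck$, and I still need to decide whether $\Delta(x,y)\leq k$ exactly. For this I would run, in parallel with the gap test, a hash-based sparse-recovery scheme: take $T=O(1/\epsilon)$ hash functions $h_1,\dots,h_T:[n]\to[B]$ with $B=\Theta(k^{1+\epsilon})$ drawn from an $\ell$-wise-independent family (Lemma~\ref{lemma_l_wise}) so that with high probability each pair of differing coordinates is separated by some $h_j$. For each bucket of each hash, maintain a short small-bias-based fingerprint that distinguishes ``no difference'' from ``at least one'' and, when a bucket holds exactly one difference, recovers the position of that difference. Merging the information across hashes yields the full set of differing coordinates, and we accept iff that set has size at most $k$; the total space is $O(k^{1+\epsilon})$ plus $O(\log n)$ for the seeds, and a union bound over the $O(k^2)$ pairs of differences keeps the failure probability low enough to reach total error $1/8$ after combining with the gap test.

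The main obstacle is the sparse-recovery stage. The per-bucket storage must be essentially $O(1)$ bits to fit in $O(k^{1+\epsilon})$ total space, but collectively the buckets across the $O(1/\epsilon)$ hash functions must still distinguish singletons from heavier buckets and identify the singleton's coordinate. One must also arrange the pseudorandom choices so that all of them together consume $O(\log n)$ random bits, and verify that the analysis goes through despite the limited-independence correlations in the hash family.
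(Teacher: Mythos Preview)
Your gap test is a legitimate alternative to the paper's outer algorithm. The paper derandomizes Yao's parity test $\langle x\oplus y,z\rangle$ with $z\sim D_{1/4k}$ by splitting the seed into two blocks and fooling a read-once DNF on one block and a linear test on the other. Your test ``$T\cap S=\emptyset$'' depends on the seed only through a single read-once DNF, so Corollary~\ref{corollary_read_once} alone suffices; this is a mild simplification of the same idea and buys you one fewer pseudorandom primitive.

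The sparse-recovery stage, however, has a real gap, and the obstacle you flag is fatal in the form you state it. Hashing $[n]$ directly into $\Theta(k^{1+\epsilon})$ buckets leaves each bucket indexed by elements of $[n]$. Any per-bucket sketch strong enough to distinguish weights $0,1,\ldots,u$ (or to recover a singleton coordinate) must then use $\Omega(\log n)$ bits: for instance, the polynomial $q_j(\alpha)=\sum_{i\in h^{-1}(j)}(x_i\oplus y_i)\alpha^{i}$ has degree up to $n$, so Schwartz--Zippel forces $\ell=\Omega(\log n)$, and the total space becomes $\Omega(k^{1+\epsilon}\log n)$ rather than $O(k^{1+\epsilon}+\log n)$. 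Pushing the per-bucket cost down to $O(1)$ bits while still recovering singleton coordinates in $[n]$ is not possible by a counting argument.

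The missing idea, which the paper supplies, is a \emph{universe reduction} before the bucketing. A pairwise-independent hash $h_0:[n]\to[O(k^2/\gamma)]$ is applied first; since at most $2k$ indices differ, with probability $1-\gamma/4$ they land in distinct cells, and one streams out XOR representations of the resulting strings $a,b\in\{0,1\}^{O(k^2)}$ using only $O(\log n)$ workspace and no per-bucket storage (Lemma~\ref{lemma_inner_1}). Only then does one hash $[O(k^2)]\to[k^{1+\delta}]$ with a $u$-wise independent family for a \emph{constant} $u$, so each bucket gets at most $u$ differing indices. Now the per-bucket polynomial has degree $O(k^2)$, so $\ell=O(\log k)$ suffices, and one computes $\min\{\Delta(B^a_j,B^b_j),u\}$ by cycling over the $k^{O(u)}=k^{O(1)}$ low-weight polynomials; the total is $O(k^{1+\delta}\log k)=O(k^{1+\epsilon})$ (Lemma~\ref{lemma_inner_2}). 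Note also that the paper never recovers coordinates in this stage---it only counts---which is all that $\Ham{n,k}$ requires.
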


\noindent
\textbf{Proof Strategy}: In order to prove the above theorem, we divide the problem into two parts. 
Assuming there are at most $2k$ errors, we design an algorithm that computes $\Ham{n,k}$ correctly
with high probability. We call this the inner algorithm. 
We design another randomized algorithm to compute $\PHam{n,k,2k}$
with high probability. We call this algorithm the outer algorithm. 

We output $1$, that is we declare that the number of 
indices where $x$ and $y$ differ is at most $k$, if and only if both the inner and the outer algorithms 
output $1$. If $x$ and $y$ differ on more than $2k$ indices, then the outer algorithm
will output $0$ with high probability. The answer of the inner algorithm will not be reliable in this case. 
Where as if they differ on more than $k$ but less than $2k$ places then the inner algorithm will output
$0$ with high probability. Let $\gamma_1,\gamma_2$ be errors in inner and outer algorithms respectively.
Then the overall error $\gamma$ is bounded by $\gamma_1+\gamma_2$. 
We prove that both $\gamma_1$ and $\gamma_2$ are bounded by $\gamma/2$
for a small constant $\gamma$.

\subsubsection{Inner algorithm} 

\begin{definition}
	Given, $k,n\in\naturals$, we say that an element $w\in
	([k]\times\{0,1\})^n$ is an \emph{XOR representation of length $n$}
	of a string $a\in \{0,1\}^k$ if for
	each $j\in [k]$, we have $a_j = \bigoplus_{i: w_i = (j,u_i)}u_i$.
\end{definition}
We think of the XOR representation as streaming updates of a structure over
$\F_2$.
\begin{lemma}
	\label{lemma_inner_1}
There is a randomized one-pass streaming algorithm which given input 
$x,y^R \in \{0,1\}^n$ such that $\Delta(x,y) \leq 2k$ computes an XOR
representation of length $n$ of $a,b \in \{0,1\}^{16k^2/\gamma}$ such that with
probability $1-\gamma/4$, $\Ham{n,k}(x,y) = \Ham{16k^2/\gamma,
k}(a,b)$ The algorithm uses $O(\log n)$ bits of randomness,
$O(\log n)$ space, and $(\log(n/\gamma))^{O(1)}$ time per item. 
\end{lemma}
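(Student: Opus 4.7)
The plan is to implement the inner algorithm as a pairwise-independent hashing of indices into $[m]$ for $m := 16k^2/\gamma$, and to define $a, b \in \{0,1\}^m$ by XOR-bucketing the bits of $x$ and $y$ according to the hash. Concretely, I would sample a pairwise-independent function $h : [n] \to [m]$ from a standard hash family (say, an appropriate polynomial-evaluation construction over $\F_{2^\ell}$) that uses $O(\log n)$ seed bits and is evaluable in $(\log(n/\gamma))^{O(1)}$ time and $O(\log n)$ space. As the $i$-th symbol of the $x$-half of the stream arrives (respectively the position of $y^R$ carrying $y_i$), the algorithm emits the update $(h(i), x_i)$ (resp.\ $(h(i), y_i)$); by the definition of XOR representation, the resulting updates encode the strings given by $a_j = \bigoplus_{i : h(i) = j} x_i$ and $b_j = \bigoplus_{i : h(i) = j} y_i$. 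Note that the algorithm itself never materializes $a$ or $b$; it only stores the seed of $h$.

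For correctness, I would set $D = \{i : x_i \neq y_i\}$ so that $|D| = \Delta(x, y) \leq 2k$ by assumption, and observe that $a_j \oplus b_j$ is the parity of $|\{i \in D : h(i) = j\}|$. Consequently $\Delta(a, b) \leq |D|$, with equality precisely when the restriction of $h$ to $D$ is injective. The case split is then immediate: if $\Delta(x, y) \leq k$, then $\Delta(a, b) \leq k$ deterministically and both $\Ham{n,k}(x,y)$ and $\Ham{m,k}(a,b)$ equal $1$; if $k < \Delta(x, y) \leq 2k$, both $\Ham$ values equal $0$ \emph{unless} the restriction of $h$ to $D$ fails to be injective, in which case $\Delta(a, b)$ may drop to $k$ or below. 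So the only bad event is ``some pair in $D$ collides under $h$,'' whose probability by pairwise independence is at most $\binom{|D|}{2}/m \leq \binom{2k}{2}/(16k^2/\gamma) \leq \gamma/8$, comfortably below $\gamma/4$.

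The claimed space, randomness, and per-item time bounds are inherited directly from the hash family. There is no substantial technical obstacle in the proof; the main conceptual point worth highlighting is why XOR-bucketing is the correct aggregation. Collisions among two indices in $D$ cancel in pairs (potentially decreasing $\Delta$), while collisions between an index in $D$ and one outside $D$ merely translate a disagreement to a different bucket and hence preserve $\Delta$. This asymmetry is precisely what allows the error to be bounded by the collision probability restricted to $D$, which is $O(k^2/m)$ and can be driven below $\gamma/4$ by taking $m = \Theta(k^2/\gamma)$. The streaming XOR representation is the natural interface for passing $(a,b)$ to the subsequent stage without ever needing to store $\Omega(k^2/\gamma)$ bits.
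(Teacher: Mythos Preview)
Your proposal is correct and essentially identical to the paper's own proof: both pick a pairwise-independent hash $h:[n]\to[16k^2/\gamma]$, define $a_j=\bigoplus_{i:h(i)=j}x_i$ (and similarly $b$), emit the XOR-representation updates $(h(i),x_i)$ on the fly, and bound the failure probability by a union bound over collisions among the at most $2k$ differing indices. Your case split (noting that $\Delta(x,y)\le k$ succeeds deterministically) is slightly more explicit than the paper's, but the argument is the same.
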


\begin{proof}
	The proof is simple. We pick a random hash function $h$ from a pairwise
	independent hash family of functions mapping $[n]$ to
	$[16k^2/\gamma]$. We think of $h$ as dividing the $n$ indices of $x$
	and $y$ into $16k^2/\gamma$ buckets. 

	Given $x,y$ such that $\Delta(x,y)\leq 2k$, call index $i$
	\emph{good} if $x_i\neq y_i$. Given two good indices $i\neq j$, the
	probability that $h$ maps both of them to the same bucket is at most
	$\gamma/16k^2$. A simple union bound tells us that with probability
	$1-\gamma/4$, all the good indices are mapped to different buckets. 

	After having picked $h$, the streaming algorithm computes the XOR
	representations of $a$, $b$ defined as follows: for any $j$, $a_j$
	is the XOR of the bits of $x$ whose indices are in the $j$th bucket;
	formally, $a_j = \bigoplus_{i: h(i) = j}x_i$; the string $b$ is
	similarly related to $y$. Clearly, if $h$ maps the good indices to
	different buckets, then $a_j \neq b_j$ iff the $j$th bucket contains
	a good index and hence $\Delta(a,b) = \Delta(x,y)$. On reading the
	input bit $x_i$, the algorithm computes the bucket $j=h(i)$ and
	writes down $(j,x_i)$ which in effect updates the $j$th bit of $a$.
	In a similar way, when scanning $y$, the algorithm updates $b$.

	The space and randomness requirements are easily analyzed. Picking a
	random hash function $h$ from a pairwise independent family as above requires
	$O(\max\{\log n, \log (k^2/\gamma)\}) = O(\log(n/\gamma))$ random
	bits by Lemma \ref{lemma_l_wise}. The algorithm needs to store these
	random bits only. Computing $h(j)$ for any $j$ only requires space
	$O(\log n/\gamma)$. Finally, the processing time per element is
	$O(\poly(\log(n/\gamma))$.
\qed\end{proof}

We will use the above algorithm as a streaming reduction and solve the
problem using the algorithm of Lemma \ref{lemma_inner_2}.

\begin{lemma}
	\label{lemma_inner_2}
	For any $n,k$ and every constant $0<\epsilon<1$ and $\gamma\geq
	\frac{1}{k^{O(1)}}$, there is a randomized one-pass streaming algorithm
	which, on inputs $a,b\in\{0,1\}^{16k^2/\gamma}$ accepts iff
	$\Delta(a,b)\leq k$ with error probability at most $\frac{\gamma}{4}$. The
	algorithm uses $O(\log k)$ bits of randomness,
	$O(k^{1+\epsilon}+\log n)$ space, and time per element is $k^{O(1)}$. The algorithm
	expects its inputs $a,b$ to be given in terms of XOR
	representations of length $n$.
\end{lemma}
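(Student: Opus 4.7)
The plan is to build a randomized one-pass streaming algorithm on $c := a \oplus b \in \{0,1\}^{16k^2/\gamma}$, which is fed as an XOR-update stream obtained by interleaving the XOR-update streams of $a$ and $b$, and then to compress its random bits using the two pseudorandom generators proved earlier, namely Lemma \ref{lemma_small_bias} and Corollary \ref{corollary_read_once}. The task reduces to deciding whether $\mathrm{wt}(c) = \Delta(a,b) \leq k$ with error at most $\gamma/4$. The feature of the XOR representation I will exploit is that any $\F_2$-linear sketch of $c$ is a single bit that can be updated in constant time per incoming XOR update; maintaining $R$ such sketches therefore costs only $O(R)$ bits of working memory.

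Following the sampling-based SMP protocols of Yao \cite{Yao03} and Huang et al.\ \cite{HSZZ06}, I first write down a fully randomized algorithm. Draw $R = O(k^{1+\epsilon})$ independent pseudorandom subsets $S_1,\ldots,S_R \subseteq [16k^2/\gamma]$ in which each coordinate is included independently with probability $p = \Theta(1/k)$ (possibly with a few distinct rates in parallel), maintain the sketches $Z_j := \bigoplus_{i \in S_j} c_i$, and accept iff $\sum_j Z_j$ lies below a carefully chosen threshold. Each $Z_j$ is a Bernoulli random variable with parameter $\mu(\mathrm{wt}(c)) = (1-(1-2p)^{\mathrm{wt}(c)})/2$, which is strictly increasing in $\mathrm{wt}(c)$, so a suitable Chernoff-type argument (or a limited-independence tail bound in the style of Bellare--Rompel, if the $R$ subsets are only $\poly\log$-wise independent) brings the statistical error below $\gamma/8$ for an appropriate choice of $R$ and $p$.

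The derandomization is the heart of the argument and is carried out in two nested layers. First, each individual $S_j$ is generated from its own short seed $r^{(j)}$ via a $\delta$-biased generator (Lemma \ref{lemma_small_bias}) with $\delta$ inverse-polynomial in $k/\gamma$; because $Z_j$ is an $\F_2$-linear function of the indicator vector of $S_j$, the law of $Z_j$ shifts by at most $\delta$ under this substitution. Second, viewed as a function of the seed tuple $(r^{(1)},\ldots,r^{(R)})$, the final acceptance predicate is \emph{read-once} in the seeds (each seed feeds exactly one $Z_j$); the outer Boolean combining function on $Z_1,\ldots,Z_R$ will be designed so that the overall predicate is, or is closely approximated by, a read-once DNF, which is exactly the class fooled by Corollary \ref{corollary_read_once}. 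That corollary then allows all $R$ seeds to be derived from a single master seed of length $O(\log k + \log(1/\gamma)) = O(\log k)$. The total space breaks up as $O(R) = O(k^{1+\epsilon})$ bits for the sketches, $O(\log k)$ bits for the master seed, and $O(\log n)$ bits for a stream counter, matching the claim; the time per update is $k^{O(1)}$ because it requires evaluating $R = k^{O(1)}$ pseudorandom set-membership tests, each computable in $\poly(\log k)$ time by the efficiency clauses of those two results.

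The main obstacle is to make the randomized template and the derandomization coexist. Concretely, (a) a naive single-rate Chernoff argument at $p = 1/k$ only produces a per-sample gap of $\Theta(1/k)$ and thus demands $R = \Omega(k^2)$ samples, far more than the $O(k^{1+\epsilon})$ I am allowed, so I anticipate needing parallel sampling at several rates or a sharper signal extracted from the promise $\mathrm{wt}(c) \leq 2k$ implicit from the outer algorithm of the surrounding proof strategy; and (b) the resulting acceptance predicate must still fit inside the class of read-once DNFs on $Z_1,\ldots,Z_R$ so that Corollary \ref{corollary_read_once} genuinely applies. Getting (a) and (b) to fit together, while keeping the combined error from the statistical step and the two derandomization steps below $\gamma/4$, is where the technical work will be concentrated.
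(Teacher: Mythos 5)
Your proposal takes a genuinely different route from the paper, and unfortunately the gap you yourself flag in part (a) is fatal rather than repairable. The inner algorithm is not a gap problem: given the promise $\Delta(a,b)\leq 2k$, it must distinguish weight exactly $k$ from weight exactly $k+1$. With Yao-style sampling, each sketch $Z_j = \bigoplus_{i\in S_j} c_i$ is a Bernoulli variable whose parameter $\mu(w) = (1-(1-2p)^w)/2$ differs between $w=k$ and $w=k+1$ by only $\Theta(p) = \Theta(1/k)$ at the natural rate $p=\Theta(1/k)$. This is not a failure of Chernoff but an information-theoretic barrier: to resolve a $\Theta(1/k)$ shift in a Bernoulli parameter with constant confidence you need $\Omega(k^2)$ independent samples, and using multiple rates $p$ in parallel does not help because the gap between $\mu(k)$ and $\mu(k+1)$ remains $O(p)$ at every rate, maximized at $p=\Theta(1/k)$ by the same calculus. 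So no acceptance rule on $O(k^{1+\epsilon})$ such sketches can achieve error $\gamma/4$; the Yao/Huang--Shi--Zhang--Zhu machinery you import is precisely what the paper uses for the \emph{outer} algorithm (distinguishing $\leq k$ from $\geq 2k$, a multiplicative gap), and it does not transfer to the inner problem.

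The paper's inner algorithm is qualitatively different and avoids sampling altogether. It picks a hash function $h:[16k^2/\gamma]\to[k^{1+\delta}]$ from a $u$-wise independent family with $u=O(1)$ (costing only $O(\log k)$ random bits by Lemma~\ref{lemma_l_wise}), so that with probability $1-\gamma/8$ no bucket receives more than $u$ of the at most $k+1$ differing coordinates. Within each bucket $j$ it then computes the \emph{exact} number of differences (capped at $u$) via a small polynomial fingerprint $q_j(\alpha)$ over $\F_{2^\ell}$ with $\ell = O(\log k)$, reusing the idea from Theorem~\ref{thm:ham} and a single shared random $\alpha$. Summing these exact per-bucket counts and thresholding at $k$ gives the answer. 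The space is $k^{1+\delta}\cdot O(\log k) = O(k^{1+\epsilon})$ for the bucket fingerprints plus $O(\log n)$ for the stream counter, and the randomness is $O(\log k)$ for $h$ plus $O(\log k)$ for $\alpha$. No small-bias generator and no read-once-DNF fooling are needed here; that derandomization is confined to the outer algorithm. Concerning your part (b): even if the sampling gap were resolved, the acceptance predicate you describe is a threshold over $\sum_j Z_j$, which is not a read-once DNF in the seeds, so Corollary~\ref{corollary_read_once} would not directly apply either.
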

\begin{proof}
First, we present the algorithm for the special case when $n=k$ and
the input is simply the pair of strings $a,b$ in the natural order of
increasing indices. We will then explain the simple modifications that
are necessary for the case when the input is a pair of XOR
representations of length $n$.

Fix a positive constant $\delta < \epsilon$. Let $h: [16k^2/\gamma]
\rightarrow [k^{1+\delta}]$ be a function picked at random.  Let $j
\in [k^{1+\delta}]$ be a fixed bucket. We have $\prob{}{h(i) = j} =
\frac{1}{k^{1+\delta}}$.

Define a set $I$ of indices as follows: if $\Delta(a,b)\leq k$, then
let $I$ be the indices where $a$ and $b$ differ; otherwise, let $I$ be
any set of $k+1$ indices where $a$ and $b$ differ. Let $u$ be the size
of a subset $U$ of $I$. We have $\prob{}{h(U) = j} \leq
\frac{1}{(k^{1+\delta})^u}$. By a union bound over $U$ of size $u$,
$\prob{}{\exists U: h(U) = j} \leq
\frac{\Choose{k+1}{u}}{(k^{1+\delta})^u} \leq
\frac{(k+1)^u}{(k^{1+\delta})^u}\leq \frac{1}{k^{\delta u/2}}$.
Therefore, since there are at most $k^2$ buckets, $\prob{}{\exists U
~\exists \mbox{ a bucket }j:\  h(U) = j} \leq 1/{k^{\delta u/2 - 2}}$.

We want this probability to be less than $\gamma/8$. Therefore we
select $u = O(1/\delta + \frac{\log(1/\gamma)}{\log k}) =
O(1)$, where the last equality uses $\gamma \geq
\frac{1}{k^{O(1)}}$ and $\delta$ is a constant. 

Note that the above argument works if we used a function from a $u$-wise independent
family of functions rather than a random function. This requires only 
$O(u\log(k/\gamma)) =
O(\log k)$ bits of randomness and space $O(\log(k/\gamma))$ by Lemma
\ref{lemma_l_wise}.  Hereafter, we assume that we have picked a hash
function $h$ from this family so that each bucket $j\in[k^{1+\delta}]$
contains at most $u$ indices from $I$. Let $B^a_j$ and $B^b_j$ be the
buckets formed by hashing $a$ and $b$ respectively, where $1 \leq j
\leq k^{1+\delta}$.

Given boolean strings $a',b'$ of the same length, define
$\Delta_u(a',b')$ to be $\min\{\Delta(a',b'),u\}$. We will compute the
function $F(a,b) = \sum_{j\in [k^{1+\delta}]}\Delta_u(B^a_j,B^b_j)$
and accept if the value computed is at most $k$. It can easily be
seen, using the properties of $h$, that this computes $\Ham{16k^2/\gamma, k}(a,b)$.

Computing $\Delta_u(B^a_j,B^b_j)$ for any $j$ is easily done using the
ideas of the algorithm of Theorem \ref{thm:ham}. We work over the
field $\F_{2^\ell}$ where $\ell$ is a parameter that we will fix
shortly. Given a polynomial $p\in \F_{2^\ell}[z]$ with only $0$-$1$
coefficients, we denote by \emph{the weight of $p$} the number of
non-zero coefficients of $p$.  Given $c\in \{a,b\}$ and $j\in
[k^{1+\delta}]$, the bucket $B^c_j$ defines for us the polynomial
$p_{j,c}(z) = \sum_{i\in h^{-1}(j)}c_iz^i$ over $\F_{2^\ell}$. Define the
polynomial $q_j(z) = p_{j,a}(z) + p_{j,b}(z)$.  The weight of $q_j$ is exactly
$\Delta(B^a_j,B^b_j)$. The algorithm to compute $F(a,b)$ is the
following:

\begin{enumerate}
	\item Pick $\alpha\in\F_{2^\ell}$ uniformly at random.
	\item For each $j$, compute $q_j(\alpha)$ and check if it evaluates
		to the same value as some polynomial $p$ of weight at most $u$. If
		so, let $w_j$ be the weight of an arbitrary such $p$; if not, let
		$w_j = u$.
	\item Output $\sum_j w_j$.
\end{enumerate}

The above algorithm errs on bucket $j$ only if there is a polynomial
$p\neq q_j$ of weight at most $u$ such that $p(\alpha) = q_j(\alpha)$.
This occurs with probability at most $O(\frac{k^2}{\gamma 2^\ell})$ for a
fixed polynomial $p$ and hence with probability at most
$\frac{(k/\gamma)^{O(u)}}{2^\ell}$ for \emph{some} polynomial $p$ of
weight at most $u$ after a union bound. After taking a union bound
over buckets, we get a failure probability of at most
$\frac{(k/\gamma)^{O(u)}}{2^\ell}$ (with different constants in the
exponents). Choosing $\ell = O(u\log(\frac{k}{\gamma})) =
O(\log k)$, we can reduce
the error to $\gamma/8$.

The overall error of the algorithm is at most $\gamma/4$. The space
used per bucket and the number of random bits used is at most
$O(\log k)$. Adding up over all buckets, the
space used is bounded by $O(k^{1+\delta}\log k)$, which is at most
$O(k^{1+\epsilon})$. The time taken by the algorithm to compute the
values $\{q_j(\alpha)\ \mid\ j\in[k^{1+\delta}]\}$ is $k^{O(1)}$.
Finally, checking if each $q_j(\alpha)$ evaluates to the same value as
a polynomial of weight at most $u$ takes time $k^{O(u)} = k^{O(1)}$.

Now for the case when the input is given as a pair of XOR
representations of length $n$. We simply note that the polynomials
$p_{j,a}$ and $p_{j,b}$ are still easy to compute. For example, on
reading the $i$th element $w_i = (j_i,u_i)$ of the XOR representation of
$a$, the algorithm simply updates the current value of
$p_{h(j_i),a}(\alpha)$ by adding $u_i\alpha^{j_i}$ to it; this works
as intended since $\F_{2^\ell}$ is a field of characteristic $2$. The
algorithm only needs an additional counter that counts up to $n$ so
that it knows when the XOR representation of $a$ ends.
\qed\end{proof}

Setting $\gamma$ to be $1/8$ in the Lemmas \ref{lemma_inner_1} and
\ref{lemma_inner_2}, we see that the space taken by the Inner
algorithm overall is $O(k^{1+\epsilon} + \log n)$, the amount of
randomness used is $O(\log n)$ and the time taken per item is $O((\log
n)^{O(1)} + k^{O(1)})$.

\subsubsection{Outer algorithm}

Given $x,y\in\{0,1\}^n$, we denote by $\langle x,y\rangle$ the
$\F_2$-inner product of $x$ and $y$. Formally, $\langle x,y\rangle =
\bigoplus_{i=1}^n x_iy_i$. 

\begin{lemma}
There is a randomized one-pass streaming algorithm that computes $\PHam{n,k,2k}$
correctly with probability $1-\gamma/2$ using $O(\log n\log(1/\gamma))$
bits of space and randomness and time per item $(\log
n)^{O(1)}\log(1/\gamma)$.
\end{lemma}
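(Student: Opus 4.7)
The plan is to derandomize a Yao-style intersection test. In each of $m = O(\log(1/\gamma))$ independent trials, I sample a pseudorandom subset $T_\ell \subseteq [n]$ for which each $i$ belongs to $T_\ell$ with approximately probability $1/(2k)$, and detect in one streaming pass whether $T_\ell$ touches the disagreement set $S = \{i : x_i \neq y_i\}$. Under truly uniform sampling $\Pr[T_\ell \cap S = \emptyset] = (1 - 1/(2k))^d$ where $d = \Delta(x,y)$; this is at least $e^{-1/2}$ when $d \leq k$ and at most $e^{-1}$ when $d \geq 2k$, giving a constant gap between the two cases. A Hoeffding bound over the $m$ independent trials then drives the probability of crossing the midpoint threshold on the wrong side to $\gamma/2$.

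The crucial derandomization observation is that the complementary event can be written as a read-once DNF. Setting $t = \lceil\log(2k)\rceil$ and indexing the seed by pairs $(i,j) \in [n]\times[t]$, put $i \in T_\ell$ iff $h_\ell(i,1) = h_\ell(i,2) = \cdots = h_\ell(i,t) = 1$; then
\[
T_\ell \cap S \neq \emptyset \iff \bigvee_{i \in S} \bigwedge_{j \in [t]} h_\ell(i,j),
\]
which is read-once in the $tn$ underlying variables. By Corollary \ref{corollary_read_once}, for any constant $\delta > 0$ this class is $\delta$-fooled by a seed of length $O(\log(tn)\log(1/\delta)) = O(\log n)$, so the pseudorandom probability of the event deviates from the uniform value $1 - (1 - 1/(2k))^d$ by at most $\delta$. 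Taking $\delta$ small enough (say $0.05$) preserves a constant gap between the $d \leq k$ and $d \geq 2k$ cases.

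To decide $T_\ell \cap S = \emptyset$ in streaming I use a Karp--Rabin style fingerprint: pick $\alpha_\ell \in \F_{2^L}$ uniformly at random with $L = O(\log n)$ and maintain $F_x^{(\ell)} = \sum_{i \in T_\ell} x_i \alpha_\ell^i$ while reading $x$ and $F_y^{(\ell)} = \sum_{i \in T_\ell} y_i \alpha_\ell^i$ while reading $y^R$, declaring trial $\ell$ ``empty'' iff $F_x^{(\ell)} = F_y^{(\ell)}$. If $T_\ell \cap S = \emptyset$ the fingerprints agree deterministically; otherwise Schwartz--Zippel bounds the collision probability by $n/2^L$, which we make $1/n^{\Omega(1)}$ and absorb into $\delta$. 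Deciding $i \in T_\ell$ on the fly requires only evaluating $G_2$ at the block for $i$, which by Corollary \ref{corollary_read_once} takes $\mathrm{poly}(\log n)$ time and $O(\log n)$ space. Summing across the $m$ independent trials, total space and randomness are $O(\log n \log(1/\gamma))$ and per-item time is $(\log n)^{O(1)} \log(1/\gamma)$.

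The main obstacle I expect is picking the right test statistic. The more obvious Yao-style parity test $\langle x, \mathbf{1}_{T_\ell}\rangle \oplus \langle y, \mathbf{1}_{T_\ell}\rangle$ is also streaming-friendly, but it expresses the relevant event as a read-once XOR of ANDs, a class whose Fourier $L_1$ weight can grow exponentially in $d$, so no small-bias distribution fools it directly. Replacing the parity test by the fingerprint-based intersection test converts the object to be fooled into a read-once DNF, landing us exactly in the regime of Corollary \ref{corollary_read_once}; this is the conceptual step that makes everything fit together.
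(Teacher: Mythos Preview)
Your proposal is correct and takes a genuinely different route from the paper. The paper keeps Yao's parity test $\langle u,z\rangle$ with $u=x\oplus y$, notes that under the expansion $z_i=\bigwedge_j z'_{ij}$ the test becomes an XOR of ANDs, and handles this by a two-layer decomposition: conditioned on all but one bit per AND, the test becomes linear in the remaining bits, and the event that this residual linear form is nonzero is a read-once DNF in the conditioned bits. Thus the paper samples the first bit of each block from a small-bias space and the remaining bits from a read-once-DNF fooler, and combines via $|pq-rs|\leq |p-r|+|q-s|$. You instead change the statistic: rather than the parity of $|T_\ell\cap S|$ you test whether $T_\ell\cap S$ is empty, so the event to be fooled is itself a read-once DNF and a single invocation of Corollary~\ref{corollary_read_once} suffices. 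The price is that emptiness is not computable by a one-bit running XOR, so you add a Schwartz--Zippel fingerprint over $\F_{2^L}$ to certify it; this costs an extra $O(\log n)$ random bits and space per trial but does not change the asymptotics. In short, the paper uses two pseudorandom primitives and a one-bit accumulator, while you use one pseudorandom primitive and an $O(\log n)$-bit fingerprint; both meet the stated bounds. Your discussion of why the parity test resists direct small-bias fooling is accurate and is exactly the obstruction the paper's decomposition is designed to circumvent.
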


\begin{proof}
For simplicity, we will assume that
$k$ is a power of $2$. All the results carry through in the general
case, with only superficial changes.

We use a protocol of Yao \cite{Yao03}. Yao devised a
one-way\footnote{Actually, Yao's protocol works in the more
restrictive \emph{simultaneous message} model, but this fact is not
relevant here.} randomized communication complexity protocol using
which two players Alice and Bob, given inputs $x$ and $y$
respectively, can decide $\PHam{n,k,2k}(x,y)$
using $O(\log(1/\gamma))$ bits of communication. A brief sketch
follows.  Let $u$ denote $x\oplus y$. Using public randomness, Alice
and Bob pick random strings $z_1,\ldots,z_\ell\in \{0,1\}^n$ for $\ell
= O(\log(1/\gamma))$ such that each bit of each $z_i$ is set to $1$
independently with probability $1/4k$ --- we call this distribution
$D_{1/4k}$. Alice computes $\langle x,z_1\rangle,\ldots, \langle
x,z_\ell\rangle$ and sends them to Bob who uses them to compute
$\langle u,z_1\rangle,\ldots, \langle u,z_\ell\rangle$. 

Let $z_i$ be picked from $D_{1/4k}$. It is easily checked that if
$\Delta(x,y)\leq k$, then $\langle u,z_i\rangle$ takes value
$1$ with probability at most $p_1 = \frac{(1-1/\sqrt{e} + o_k(1))}{2}$. On the other
hand, if $\Delta(x,y)\geq 2k$, then $\langle u, z_i\rangle$
takes value $1$ with probability at least $p_2 = \frac{(1-1/e + o_k(1))}{2}$.
Thus, by performing a suitable threshold on the number of $i$ such
that $\langle u, z_i\rangle = 1$ --- say by checking if the
number of $1$s is at least $\frac{p_1+p_2}{2}$ --- Bob can compute
$\PHam{n,k,2k}(x,y)$ correctly with error probability at most
$\frac{\gamma}{2}$.

Using the above ideas, we wish to come up with a streaming algorithm
for this problem that is also randomness efficient. Both these
constraints require us to change the original protocol. (Note that the
obvious implementation of the above protocol in the streaming setting
will require $\Omega(n\ell)$ bits of space and randomness.) To reduce
the amount of randomness, we run Yao's protocol with pseudorandom $z$
from a distribution $D$ that fools the linear test defined by the
string $u$ (see Section \ref{section_def} for the definition of a
``linear test''). Formally, we want a distribution $D$ over $\{0,1\}^n$
such that for
any $w\in \{0,1\}^n$ 

\[ \left| \prob{z\sim
D}{\langle w,z\rangle = 1} - \prob{z\sim D_{1/4k}}{\langle w,z\rangle
= 1}\right|\leq
1/100
\]

That is, we want a distribution $D$ that $1/100$-fools linear tests
w.r.t. $D_{1/4k}$. Furthermore, we would like to be able to sample
from $D$ using a small number of random bits.

Fix $w\in \{0,1\}^n$. Let $f_w$ be the following related test, defined
on $\{0,1\}^{nt}$:
$$f_w(z'_{11},z'_{12},\ldots,z'_{1t},z'_{21},z'_{22},\ldots,z'_{2t}, \ldots,
z'_{n1},z'_{n2},\ldots,z'_{nt}) = \bigoplus_{i: w_i = 1}(\bigwedge_{j=1}^{l}
z'_{ij})$$
where $2^t = 4k$. Note that 
$\prob{z \in D_{1/4k}}{\langle w,z\rangle = 1} = \prob{z' \in
\U_{nt}}{f_w(z')=1}$,
where $\U_r$ denotes the uniform distribution on $\{0,1\}^r$.

We will first design a distribution $D'$ over $\{0,1\}^{nt}$ that
$1/100$-fools the family of tests $\{f_w\ |\ w\in\{0,1\}^n\}$ w.r.t.
the uniform distribution $\U_{nt}$. 

Now, we describe $D'$. We break the variables $\{z'_{ij}\}$ into two
blocks: $B_1 := \{z'_{i1}\ |\ i\in [n]\}$ and $B_2 := \{z'_{ij}\ |\
i\in [n], j\neq 1\}$. Consider the test $f_w$ evaluated at a random
input $z'\in\{0,1\}^{nt}$. It is helpful to view
this evaluation as a two-step process: In the first step, we
substitute a uniform random string $z'_2$ for the tuple of variables
in $B_2$. After this substitution, $f_w(\cdot, z'_2)$ becomes a linear function on
the variables in $B_1$. If this linear function is the zero linear
function, then $f_w(z'_1,z'_2)$ cannot evaluate to $1$ on any setting
$z'_1$ of the variables in $B_1$. On the other hand, if this linear
function is non-zero, then $f_w(z'_1,z'_2)$ evaluates to $1$ with
probability exactly $1/2$ over the choice of $z'_1$. Putting things
together, we see that $\prob{z'}{f_w(z') = 1} =
(1/2)\prob{z'_2}{g_w(z'_2) = 1}$, where $g_w$ is the following
\emph{read-once} DNF formula on the variables in $B_2$ that tells us exactly
when $f_w$ becomes a non-zero linear function on the variables in
$B_1$: $g_w = \bigvee_{i: w_i = 1}\bigwedge_{j=2}^{l} z'_{ij}$

This tells us that we only need to fool read-once DNFs and linear
tests w.r.t. the uniform distribution to fool $f_w$-tests w.r.t. the
uniform distribution. We will generate $z'\in D'$ as follows: $z'_1$
will be sampled from an explicit $\delta_1$-biased space $D'_1$ and
$z'_2$ will be independently sampled from an explicit space $D'_2$
that $\delta_2$-fools read once DNFs. We have:
\begin{align*}
	&\left|\prob{z'\sim D'}{f_w(z') = 1} -
	\prob{z'\sim \U_{nt}}{f_w(z') = 1}\right| \\ 
	&=\left|\prob{z'_2\sim D'_2}{g_w(z'_2) = 1}\cdot \prob{z'_1\sim
	D'_1}{f_w(z'_1,z'_2) = 1\ |\ g_w(z'_2) = 1} -
	\frac{\prob{z'_2\sim\U_{n(t-1)}}{g_w(z'_2) = 1}}{2}\right|\\
	&\leq \left|\prob{z'_2\sim D'_2}{g_w(z'_2) = 1} -
	\prob{z'_2\sim\U_{n(t-1)}}{g_w(z'_2) = 1}\right| + \left|\prob{z'_1\sim
	D'_1}{f_w(z'_1,z'_2) = 1\ |\ g_w(z'_2) = 1} - \frac{1}{2}\right|\\
	&\leq \delta_1 + \delta_2
\end{align*}

where the first inequality uses the fact that $|pq -rs|\leq |p-r| +
|q-s|$ for any $p,q,r,s\in [0,1]$, and the second inequality follows
from the definitions of $D'_1$ and $D'_2$. Choosing $\delta_1$ and
$\delta_2$ to be small enough constants, we obtain a pseudorandom
distribution $D'$ that $1/100$-fools $f_w$-tests w.r.t. the uniform
distribution. By Lemma \ref{lemma_small_bias} and Corollary
\ref{corollary_read_once}, the amount of randomness required for the
above is $O(\log n)$. Using $D'$, we can define a distribution $D$
that fools linear tests w.r.t. $D_{1/4k}$ as follows: to pick $z\sim
D$, we pick $z'\sim D'$ and output $z$ defined by $z_i =
\wedge_{j=1}^t z'_{ij}$ for each $i$. It is easily seen from the
definition of the tests $f_w$ that the distribution $D$ $1/100$-fools
all linear tests w.r.t. $D_{1/4k}$.  Since no additional randomness is
used to generate $D$, the amount of randomness used is $O(\log n)$.
Moreover, by Lemma \ref{lemma_small_bias} and Corollary
\ref{corollary_read_once}, given a random seed $r$ of length $O(\log
n)$ and $j\in [n]$, the $j$th bit of the output of $D$ on this random
seed can be generated using $O(\log n)$ bits of space in time
$\poly(\log n)$.

With the pseudorandom distribution $D$ in place, we can run Yao's
protocol in the streaming setting with independent random strings
$z_1,\ldots,z_\ell$ picked from the distribution $D$. Exactly as
above, for suitably chosen $\ell = O(\log (1/\gamma))$, this algorithm
computes $\PHam{n,k,2k}(x,y)$ with error probability at most
$\gamma/2$. The space and randomness used are both $O(\log n\log
1/\gamma)$ and the time taken is $n\poly(\log
n)\log(1/\gamma)$. Setting $\gamma = 1/8$, this proves the lemma and
also concludes the proof of Theorem \ref{thm_ham_rand}. 
\qed\end{proof}

\section{Accepting $\Dyckt$ with errors}
\label{dycktsection}
In this section we consider the membership problem of $\Dltk{\Dyckt}$. We assume
that disregarding the type of the brackets the string is well-matched. We only
consider the kind of errors where an opening(closing) parenthesis of one type is replaced by an
opening(closing, respectively) parenthesis of the other type. 
We prove the following theorem:
\begin{theorem}
\label{thm:deltak-dyckt}
For any $k$ there exists a constant $c>0$,  
there is a randomized one-pass algorithm such that, given a
string $w \in \Sigma^n$, 
if $w \in \Dltk{\Dyckt}$ then with probability at least $1 -1/n^c$
it accepts
$w$ and if $w \notin \Dltk{\Dyckt}$, then with probability $1 -1/n^c$ it rejects
$w$. The algorithm uses $O(k \log n + \sqrt{n \log n})$ space and takes $poly(k
\log n)$ time per item and $n^{k+O(1)}$ time for post-processing.  
\end{theorem}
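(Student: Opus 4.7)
The plan is to combine the $O(\sqrt{n\log n})$-space streaming algorithm for $\Dyckt$ of Magniez et al.~\cite{MMN09} with the polynomial-fingerprinting method of Theorem \ref{thm:ham}. Under the restricted error model assumed here, the shape of $w$ (its sequence of opens and closes, ignoring types) is always well-matched, so the matching between opening and closing positions is an intrinsic property of the input stream. Encoding $w$ as a bit string by mapping $($ and $)$ to $0$ and $[$ and $]$ to $1$ to obtain $u\in\{0,1\}^n$, the task reduces to verifying that for all but at most $k$ matched pairs $(i,j)$ the bits $u_i$ and $u_j$ agree.

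The fingerprint I would maintain is
\[
q(\alpha) \;=\; \sum_{\text{matched pair }(i,j)} (u_i\oplus u_j)\,\alpha^{i},
\]
evaluated at a random $\alpha\in\F_{2^\ell}$ with $\ell=O(k\log n)$. Since the monomials $\alpha^i$ are distinct, $q$ is a boolean polynomial of degree less than $n$ whose weight equals the number of type mismatches in $w$; by the Schwartz--Zippel argument of Theorem \ref{thm:ham}, it suffices to check whether $q(\alpha)=p(\alpha)$ for some boolean polynomial $p$ of weight at most $k$ to decide $w\in\Dltk{\Dyckt}$ with failure probability $1/n^c$. Crucially, $q(\alpha)$ is a single field element and uses $O(k\log n)$ bits regardless of the number of matched pairs; the post-processing is identical to that of Theorem \ref{thm:ham} and costs $n^{k+O(1)}$ time.

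To update $q(\alpha)$ in streaming, the opening bracket at position $i$ contributes $u_i\alpha^i$ immediately, while the closing bracket at position $j$ contributes $u_j\alpha^i$, where $i$ is the position of its matched open. I would modify MMN's algorithm so that each pop delivers the matched opening \emph{position} (an $O(\log n)$-bit identifier) rather than merely confirming a type match. A naive implementation that instead carries the partial evaluation $\alpha^i$ through each of MMN's snapshots blows the space up to $O(k\sqrt{n\log n})$, since the $\Theta(\sqrt{n/\log n})$ snapshots are each inflated by a factor of $\ell=\Theta(k\log n)$; by storing only the $O(\log n)$-bit positions and recomputing $\alpha^i$ on-the-fly via repeated squaring in $\poly(\log n)$ time, the MMN machinery keeps its $O(\sqrt{n\log n})$ budget, and the sole extra storage is the single field element $q(\alpha)$ of $O(k\log n)$ bits. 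This matches the claimed space $O(k\log n+\sqrt{n\log n})$, the per-item time $\poly(k\log n)$, and the randomness bound $O(k\log n)$ (dominated by the choice of $\alpha$).

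The main obstacle is the surgical modification of MMN's stack data structure: their original algorithm only needs to recognise whether a pop matches the corresponding push (via hash comparisons), whereas my variant must surface the matched push's \emph{position} to the caller. This requires threading an $O(\log n)$-bit identifier through their snapshot-and-short-stack construction while preserving its space and time bounds, which I expect to be the most delicate part of the argument. Once this is done, the Schwartz--Zippel analysis of Theorem \ref{thm:ham} applies verbatim to $q(\alpha)$, and the quantitative bounds claimed in Theorem \ref{thm:deltak-dyckt} follow.
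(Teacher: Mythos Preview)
Your high-level strategy coincides with the paper's: combine MMN's block decomposition with the fingerprint of Theorem~\ref{thm:ham}, keep a single field element of $O(k\log n)$ bits rather than one per snapshot, and store $O(\log n)$-bit identifiers on the long stack instead of partial evaluations. You also correctly flag the one real obstacle, namely what identifier to thread through the stack so that a pop lets you recover the right exponent.

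Your tentative choice---the original stream position $i$---does not quite work, and this is the gap. After a block is reduced, its surviving opening brackets can sit at non-consecutive stream positions (e.g.\ the block $(\,[\,]\,($ leaves opens at positions $1$ and $4$), so the unmatched opens from one block cannot be summarized by a single $O(\log n)$-bit record on the long stack; storing their individual positions is infeasible since the total number of currently unmatched opens can be $\Theta(n)$. The paper's fix is a small change of exponent: use as the exponent the \emph{rank} of the open among all opens in the reduced string (what the paper calls the \emph{index}). After reducing a block, the surviving opens then have \emph{consecutive} indices, so each long-stack entry is a single interval $[m,m']$; a pop simply returns $m'$ and decrements it. Intra-block mismatches are counted directly by a separate counter $Err$, and the polynomial handles only inter-block pairs. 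With this coordinate change in place, your Schwartz--Zippel analysis and the $n^{k+O(1)}$ post-processing carry over verbatim.
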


It is easy to see that combining the ideas from \cite{MMN09} and from the
previous section, we can accept $\Dltk{\Dyckt}$ using $O(k\sqrt{n \log n})$
space. But for the bound stated in Theorem \ref{thm:deltak-dyckt}, we need more
work.

In \cite{MMN09} a one-pass randomized streaming algorithm was given for the
membership testing of $\Dyckt$. We refer to that as the MMN algorithm. 
We make one change to the MMN algorithm. We use the stack only to store
indices from the set $[n]$, and do not store the partial evaluations of a
polynomial on the stack.

Divide the input into $\sqrt{n /\log n}$ blocks
of length $\sqrt{{n}{\log n}}$ each. In each block, check for balanced
parentheses and if there are less than or equal to $k-Err$ mis-matches, then reduce
the remaining string to a string (possibly empty)
of closing parentheses followed by a string (possibly empty) of opening
parentheses. Here, $Err$ is a counter that maintains the number of 
mismatches found so far. 
If $Err$ exceeds $k$, then halt the algorithm and reject.   

Let $x$ denote the reduced string obtained by matching parentheses within each
block. (Note that this can be done in a streaming fashion.) For the reduced string
$x$ we say that the  
opening parenthesis at position $i$ has an index $h_{x,i}$ if it is the $h$th opening parenthesis in
$x$. We say that the closing parenthesis has index $h_{x,i}$ if it is the
the closing parenthesis that closes an opening parenthesis having index $h_{x,i}$
in $x$.  We drop $x$ when it is clear from the context.

\begin{obs}
\label{obs:index}
Note that no two opening (or two
closing) parentheses have the same index.

Also, an opening parenthesis
has the same index as another closing parenthesis if and only if they form a matching pair
in the string obtained from the input string by disregarding the type of the
parentheses. 

For example in the input $(([])[])$ the indices of the opening parentheses would
be $(^1(^2[^3])[^4])$ and the indices of the opening and closing parentheses
would be $(^1(^2[^3]^3)^2[^4]^4)^1$.  If we reorder the input such that all
opening parentheses are in the first half with ascending index and the closing
parentheses are in the second half with descending input we get
$(^1(^2[^3[^4]^4]^3)^2)^1$. 
\end{obs}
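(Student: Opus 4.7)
}
The plan is to verify the two claims directly from the definitions, after first recording the structural property of $x$ that makes the closing-parenthesis index well-defined. Recall the standing assumption of this section: disregarding the type of the brackets, the input string is well-matched. The block-wise reduction performed by the algorithm only removes matched adjacent pairs within a block (and leaves the rest as a run of closing parentheses followed by a run of opening parentheses, which are concatenated across blocks). Since deleting a matched adjacent pair preserves well-matching in the type-ignored sense, the reduced string $x$ is also well-matched when types are disregarded. Consequently the classical parenthesis matching on $x$ (ignoring types) is a bijection between the set of opening parentheses of $x$ and the set of closing parentheses of $x$.

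For the first claim, uniqueness of indices among opening parentheses is immediate from the definition: the $h$th opening parenthesis of $x$ (scanning left-to-right) is assigned index $h$, and these ordinals are pairwise distinct. Uniqueness among closing parentheses follows from the bijection above. Suppose two closing parentheses at positions $i\neq j$ were assigned the same index $h$. By definition each of them would be the (unique) closing parenthesis that, in the type-ignored matching on $x$, closes the $h$th opening parenthesis. But the matching is a bijection, so the $h$th opening parenthesis has exactly one match; this forces $i=j$, a contradiction.

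For the second claim, the forward direction is precisely the definition: if a closing parenthesis at position $j$ has the same index $h$ as the opening parenthesis at position $i$, then by construction the closing parenthesis at $j$ is the one that closes the opening parenthesis of index $h$ in the type-ignored matching, which is the one at position $i$. Conversely, if the opening parenthesis at position $i$ and the closing parenthesis at position $j$ form a matched pair in the type-ignored $x$, then letting $h$ be the index of the opening one, the definition assigns $h$ to the closing parenthesis as well. Both directions therefore hold.

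The only substantive step in this plan is the preliminary observation that $x$ remains well-matched in the type-ignored sense after the block reduction; the rest is bookkeeping. I do not anticipate a genuine obstacle, since the block reduction was designed precisely to preserve this invariant, and the example $(^{1}(^{2}[^{3}]^{3})^{2}[^{4}]^{4})^{1}$ in the statement illustrates the intended semantics of the indexing and its agreement with the type-ignored matching.
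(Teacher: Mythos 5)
Your proof is correct and matches the paper's (implicit) reasoning: the paper states this as an Observation without argument, treating it as a direct consequence of the definitions, and your write-up simply makes that reasoning explicit. The only substantive point you add — that the type-ignored reduced string $x$ remains well-matched because block reduction only removes matched adjacent pairs, so the matching on $x$ is a genuine bijection — is exactly the invariant the paper relies on tacitly (it is the standing assumption of Section~\ref{dycktsection} together with the algorithm's rejection on stack underflow), and you are right that once it is in place, both claims follow by bookkeeping.
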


We now describe the procedure for computing the index. 
The index of an opening parenthesis is easy to compute. It is a monotonically
increasing quantity and can be stored in a $O(\log n)$ bit counter say $c_{open}$.
It is initialized to $0$ and incremented by $1$ every time an opening parenthesis
is encountered. To compute the index of a closing parenthesis we use the
stack. 
The stack is only being used to compute the index. At any stage during the
algorithm we maintain the intervals of yet to be matched open parentheses on the
stack. %

The first block consists of only opening parentheses (assuming we have already reduced
the block). After processing the first
block we push $[1,c_{open}]$ on the stack to remember the interval of
indices yet to be matched. 
(In the case that all parentheses of the first block are matched within the
first block, the next block is treated as the first block).

Now suppose we process the next block (also assume that this block is already reduced).  A reduced block will consist of a sequence of closing parentheses (possible empty) followed by a sequence of opening parentheses (possible empty).
Recall we maintain the intervals yet to be matched on the stack, say the stack-top is $[m,m']$. 

If the block begins with a (non-empty) string of closing parentheses, $m'$ is
the index of the first closing parenthesis, and is decremented after reading the
closing parenthesis. As long as $m'\geq m$, this continues. If $m'<m$ we get the
next interval of unmatched parenthesis from the stack. (If the stack is empty the input is not well matched disregarding the type of parentheses and we reject the input).  If the string of closing parentheses ends while $m\geq m'$, we push the remaining interval to the stack.

When reading the string of opening parentheses we let $m=c_{open}$ at the
beginning, process all opening parentheses, and before moving on to the next
block we push $[m, c_{open}]$ (where $c_{open}$ is updated value after reading
the string of opening parentheses).

While processing closing
parentheses, the stack is, if at all, popped but never pushed. While processing
opening parentheses, a new stack element may be pushed. However, this happens at
most 
once per block, and therefore at most $\sqrt{n/\log n}$ times. Also the elements of the stack are
tuples of indices so they are $O(\log n)$ bits. Therefore, the total space needed 
to compute indices is $O(\log n\sqrt{n /\log n})=O(\sqrt{n\log n})$. 

So we can compute the index of the parentheses, and now we show how to use this
to compute $\Dltk{\Dyckt}$.  
Now assume that, at any stage $i$, we have the index of
the input symbol $x_i$.  Let the sign of the opening parentheses be $+1$ and
that of closing parentheses be $-1$. 
We think of the reduced string $x \in \{(,[,),]\}^*$ as a string over $\{0,1\}^*$ by replacing
every occurrence of `(' and `)' by a $0$ and every occurrence of `[' and `]' by a
$1$. We think of this string defining a Boolean polynomial $p_x(z) =
\sum_i sign(x_i) x_iz^{\mbox{\scriptsize{index of }} x_i}.$ 
Due to Observation \ref{obs:index}, it is easy to see that the weight of the
polynomial $p_x$ is at most $k-Err$ if and only if $w \in \Dltk{\Dyckt}$. 
We now check whether $w \in \Dltk{\Dyckt}$ by evaluating $p_x$ at a random
$\alpha \in \F_{2^l}$. Assuming that we know how to compute index of $x_i$ at
step $i$, we can evaluate $p_x$ as in the proof of Theorem \ref{thm:ham}.

Given below is the algorithm that uses the index finding procedure as described
above, and evaluates polynomial $p_x$ at a random location to test membership of
$w$ in $\Dltk{\Dyckt}$. 
In addition to th space required for computing the indices, $O(l)$ bits are
required to store evaluation of $p_x$. But this does not
need to be stored multiple times on the stack. Therefore, the algorithm uses
$O(l + \sqrt{n \log n}) = O(k \log n + \sqrt{n \log n})$ space.

The proof of correctness and the error analysis are similar to the proof of 
Theorem \ref{thm:ham}. Thus we get Theorem \ref{thm:deltak-dyckt}. 
The detailed algorithm is given below.

\bigskip
\hrule
  \begin{algorithmic}[1]
	\STATE pick $\alpha \in_{R} \F_p$, set $sum \leftarrow 0$, set
		$c_{open} \leftarrow 0$, set $Err \leftarrow 0$,
	\FOR{each block}
	\STATE read the word $y$ consisting of the next $\sqrt{n\log n}$ letters
		(or less if the stream becomes empty),	
	\STATE check that matching pairs within $y$ have at most $k-Err$ errors (if not,
		reject: “mismatched parentheses”), if so, update $Err$ to this
value,
	\STATE simplify $y$ into $uv$ , where $u$ has only closing parentheses and $v$
		has only opening parentheses,
	\FOR{$i=1$ to $|u|$}
		\STATE pop $[m,m']$ from the stack (reject if nothing to pop),
		\STATE $sum \leftarrow 
				\begin{array}{cc}
				sum - \alpha^{m'} & \mbox{if } u[i] = `]'
			\mbox{~~~~(unchanged otherwise)}\\
						\end{array}
			$
		\STATE $m' \leftarrow m' - 1$,
		\STATE if $m \geq m'$, push $[m,m']$,
	\ENDFOR
	\STATE $sum \leftarrow sum + \sum_{j=c_{open}}^{c_{open}+|v|-1}
v_j\alpha^{j}$, \newline(by abuse of notation, $v_j=1$ if $v_j=$`[' and is
$0$ otherwise)
	\STATE push $[c_{open}, c_{open}+ |v|]$,
	\STATE set $c_{open} \leftarrow c_{open}+ |v|$,
	\ENDFOR
	\STATE Check $sum=p(\alpha)$ for any polynomial $p$ with $0$-$1$
coefficients of degree less than $n$ and weight at most $k$. If not, REJECT.
  \end{algorithmic}
\hrule
\paragraph{Reducing the randomness}
The ideas used in reducing randomness for $\Dltk{\otDyckt}$ also work for
reducing randomness in the membership
testing of $\Dltk{\Dyckt}$. Here, instead of hashing the input positions, we
hash the indices using the random hash functions. For computing the
indices, we use the procedure described above. Instead of computing polynomials,
we compute hashes and follow the steps as in Sections 3.1, 3.2, and 3.3.

We get the following theorem:
\begin{theorem}
For every constant $0 \leq \epsilon \leq 1$, there is  randomized one-pass
algorithm that tests membership in $\Dltk{\Dyckt}$ using $O(k^{1+\epsilon} +
\sqrt{n\log n})$ space, $O(\log n)$ randomness, $O(\log^{O(1)}n + k^{O(1)})$
time and errs with probability $1/8$.
\end{theorem}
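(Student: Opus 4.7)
The plan is to substitute the randomness-efficient $\Ham{n,k}$ machinery of Theorem \ref{thm_ham_rand} into the block-reduction algorithm of Theorem \ref{thm:deltak-dyckt}. Recall that after processing each block, every parenthesis of the reduced string $x$ acquires an index $h\in [n]$ with the property (Observation \ref{obs:index}) that the opening and closing parenthesis sharing index $h$ are precisely the pair matched in the underlying well-parenthesized skeleton. Letting $a_h$ and $b_h$ denote the type bits ($0$ for round, $1$ for square) of the opening and closing parenthesis at index $h$, the number of bracket-type errors in $x$ is exactly $\Delta(a,b)$, so membership in $\Dltk{\Dyckt}$ reduces to verifying $\Delta(a,b)\leq k-Err$ while keeping $Err\leq k$. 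This is precisely a $\Ham{\cdot,k}$ instance, fed online through the indices produced by the stack reduction.

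First I would run the index-computation procedure exactly as in Theorem \ref{thm:deltak-dyckt}: the interval stack plus the block buffer costs $O(\sqrt{n\log n})$ space and, for every surviving symbol, emits a pair $(h,\mathrm{bit})$. These pairs are piped, in the order produced, into the inner and outer sub-algorithms of Theorem \ref{thm_ham_rand}. The inner sub-algorithm picks a $u$-wise independent hash $H:[n]\rightarrow [k^{1+\delta}]$ for some $\delta<\epsilon$ and maintains, for each bucket $j$, a single element of $\F_{2^\ell}$ with $\ell=O(\log k)$; on input $(h,\mathrm{bit})$ it adds $\mathrm{bit}\cdot \alpha^h$ to bucket $H(h)$, irrespective of whether the symbol was opening or closing (characteristic two subsumes sign). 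The outer sub-algorithm samples $\ell'=O(1)$ pseudorandom strings $z_1,\ldots,z_{\ell'}\in\{0,1\}^n$ from the derandomized distribution $D$ of the previous section using an $O(\log n)$-bit seed, and maintains one register $r_i$ per $z_i$, updated by $r_i\leftarrow r_i\oplus (\mathrm{bit}\cdot z_{i,h})$ for every pair $(h,\mathrm{bit})$, so that at the end $r_i = \langle a\oplus b, z_i\rangle$. The overall algorithm accepts iff $Err\leq k$ throughout, the inner aggregate-weight test reports $\leq k$, and the outer threshold test (Yao's $p_1,p_2$ comparison) passes.

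The resource bounds fall out of those of Theorem \ref{thm_ham_rand} plus the $O(\sqrt{n\log n})$ contribution of the index stack: the $k^{1+\delta}$ bucket registers use $O(k^{1+\epsilon})$ space in total; the hash seed, the small-bias seed, and the read-once DNF fooler seed total $O(\log n)$ random bits; per-symbol processing is one bucket update and $O(1)$ bit XORs, costing $\poly(\log n)+k^{O(1)}$; post-processing enumerates weight-$u$ boolean polynomials in $k^{O(1)}$ time. The inner, outer, and intra-block failure probabilities each stay below a constant fraction of $1/8$ by choosing the pseudorandom-distribution parameters exactly as in the proof of Theorem \ref{thm_ham_rand}.

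The main obstacle is that the $(h,\mathrm{bit})$ pairs emerge in a highly non-monotone, data-dependent order: opening indices ascend, but closing indices are released in LIFO waves driven by the interval stack. One must check that both sub-algorithms are oblivious to update order. For the inner algorithm this is immediate, since each bucket stores a linear combination $\sum_{h:H(h)=j}(a_h+b_h)\alpha^h$ in a field of characteristic two, whose value is order-independent; the same observation handles the outer algorithm's XOR register $r_i$. Hence the randomness-efficient algorithm behaves identically whether the indices are supplied in the natural left-to-right order of a two-sided $\Ham{n/2,k}$ instance or in the order produced by the reduction, and the correctness analysis of Theorem \ref{thm_ham_rand} transfers verbatim.
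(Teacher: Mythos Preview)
Your approach is exactly the one the paper takes: run the index-computation/stack procedure of Theorem~\ref{thm:deltak-dyckt} to produce $(h,\mathrm{bit})$ pairs, and feed these into the randomness-efficient Hamming-distance machinery of Theorem~\ref{thm_ham_rand}. The observation that both the inner and outer sub-algorithms are oblivious to update order (being sums in characteristic two) is the key point that justifies this plug-in, and you state it explicitly; the paper's own proof only gestures at it.

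There is, however, a technical slip in your description of the inner algorithm. You collapse the two-stage inner procedure of Theorem~\ref{thm_ham_rand} into a single $u$-wise independent hash $H:[n]\to[k^{1+\delta}]$ and then store, per bucket, an element of $\F_{2^\ell}$ with $\ell=O(\log k)$, updating by $\mathrm{bit}\cdot\alpha^h$. But with exponents $h$ ranging over $[n]$, the bucket polynomial $q_j$ has degree up to $n-1$, so the Schwartz-Zippel error per comparison is $n/2^\ell$, which is useless when $2^\ell=k^{O(1)}\ll n$. Correspondingly, your claim that post-processing enumerates weight-$u$ boolean polynomials in $k^{O(1)}$ time cannot hold with exponents in $[n]$: there are $\binom{n}{\leq u}=n^{\Theta(1)}$ of them. (Taking $\ell=O(\log n)$ instead would fix the error bound but inflate the bucket space to $O(k^{1+\delta}\log n)$, which need not be $O(k^{1+\epsilon}+\sqrt{n\log n})$.) The paper's inner algorithm avoids this by first applying the pairwise hash of Lemma~\ref{lemma_inner_1} to shrink the index domain from $[n]$ to $[16k^2/\gamma]$, and only then the $u$-wise hash of Lemma~\ref{lemma_inner_2}; the exponent used in the bucket polynomial is the \emph{intermediate} bucket label in $[O(k^2)]$, not the original index $h$. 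With that fix in place your argument goes through and matches the paper's (very terse) proof.
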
  

\section{Lower bound}
\label{lbsection}

We will show a lower bound for $\PHam{n,k,k+2}$ by reducing the augmented indexing problem \IND\ (see \cite{JW11}) to it.

Let $\U \cup \{\perp\}$ denote  a large input domain, where $\perp \notin \U$.
Define the problem \IND\ as follows:
Alice and Bob receive inputs $x=(x_1,x_2,\ldots,x_N) \ \in \U^N$ and
$y=(y_1,y_2,\ldots,y_N) \in (U\cup \{\perp\})^N$, respectively. 
The inputs have the following promise:
There is some unique $i\in[N]$, such that $y_i \in \U$, and for $k<i$: $x_k=y_k$, and for $k>i$: $y_k = \perp$. 
The problem \IND\ is defined only over such promise inputs and \IND$(x,y) =1$ if and only if $x_i=y_i$.

In \cite[Corollary 3.1]{JW11} they proved the following result:
\begin{theorem}[\cite{JW11}]
Any randomized one-way communication protocol that makes $\delta = 1/4|\U|$
error requires $\Omega(N \log 1/\delta)$ bits of communication.
\end{theorem}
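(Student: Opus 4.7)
The plan is to prove the lower bound via an information-theoretic direct sum argument. Consider the following hard distribution on inputs: draw $x_1,\ldots,x_N\in\U$ independently and uniformly; pick $i\in[N]$ uniformly; set $y_j=x_j$ for $j<i$, $y_j=\perp$ for $j>i$, and let $y_i$ equal $x_i$ with probability $1/2$ and otherwise be a uniformly random element of $\U\setminus\{x_i\}$. This is a valid promise distribution for \IND, and on it any randomized protocol must succeed with probability at least $1-\delta$. Fix a protocol using $m$ bits of communication and let $M=M(x)$ denote Alice's random message; our goal is to show $m=\Omega(N\log(1/\delta))$.

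The key step is to show that for every $i\in[N]$,
$$I(M;x_i\mid x_1,\ldots,x_{i-1}) \;=\; \Omega(\log|\U|).$$
To establish this, condition on $x_1,\ldots,x_{i-1}$ and exhibit a decoder for Bob that recovers $x_i$ from $M$: for each candidate $\xi\in\U$, Bob simulates the protocol with the input $(x_1,\ldots,x_{i-1},\xi,\perp,\ldots,\perp)$ and outputs any $\xi$ for which the protocol accepts. When $\xi=x_i$ the simulation accepts with probability $\geq 1-\delta$; for every $\xi\neq x_i$ it rejects with probability $\geq 1-\delta$. A union bound over the $|\U|$ values of $\xi$ yields overall success probability $\geq 1-|\U|\delta\geq 3/4$ by the choice $\delta=1/(4|\U|)$. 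Fano's inequality then gives $H(x_i\mid M,x_{<i})\leq H(1/4)+(1/4)\log(|\U|-1)\leq 1+(1/4)\log|\U|$, and since $H(x_i\mid x_{<i})=\log|\U|$, we conclude $I(M;x_i\mid x_{<i})\geq (3/4)\log|\U|-1=\Omega(\log(1/\delta))$.

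The theorem then follows from the chain rule: since the $x_j$'s are mutually independent,
$$m\;\geq\; H(M)\;\geq\; I(M;x_1,\ldots,x_N) \;=\; \sum_{i=1}^N I(M;x_i\mid x_{<i}) \;=\; \Omega(N\log(1/\delta)).$$
The main obstacle is the decoding step: we need $\delta$ small enough that the union bound over $|\U|$ simulated executions still yields constant success probability, which is exactly why the statement sets $\delta=1/(4|\U|)$. A secondary subtlety is that the simulations above replace the prefix $y_{<i}$ with $x_{<i}$ (legitimate since Bob has just conditioned on these) and the suffix with $\perp$'s, so the decoder only uses $M$ and the conditioning, making the Fano step applicable. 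The choice of the symmetric mixture for $y_i$ in the hard distribution ensures that the YES/NO cases appear with comparable probability and that $y_i$ alone carries no information about $x_i$, keeping the reduction to the single-coordinate lower bound clean.
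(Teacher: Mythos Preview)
The paper does not prove this theorem: it is quoted verbatim as Corollary~3.1 of Jayram and Woodruff~\cite{JW11} and used as a black box for the reduction to $\PHam{n,k,k+2}$. So there is no ``paper's own proof'' to compare against.

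That said, your argument is essentially correct and is the standard information-theoretic proof of the augmented-indexing lower bound (indeed, this is the approach of~\cite{JW11}). A couple of minor remarks. First, the hard distribution you specify on $(i,y)$ is not actually used anywhere in your argument: your decoding step invokes the \emph{worst-case} error guarantee of the protocol on each simulated input $(x_1,\ldots,x_{i-1},\xi,\perp,\ldots,\perp)$, and the mutual-information chain rule only requires that $x_1,\ldots,x_N$ be independent and uniform over $\U$. So the last paragraph of your plan (about the ``symmetric mixture for $y_i$'') is unnecessary. Second, writing $M=M(x)$ is slightly imprecise since the protocol may be randomized; one should either fix the public randomness first (which you can do after the union bound, since a random protocol with error $\leq 1/4$ has some fixing of coins with error $\leq 1/4$) or condition on it throughout. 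Neither point affects the validity of the argument.
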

We use this result and prove a lower bound for $\PHam{n,k,k+2}$. 

Let $|\U| = n/k$. Let $f_A: \U \rightarrow \{0,1\}^{3n/k}$, and 
$f_B: (\U \cup \{\perp\}) \rightarrow \{0,1\}^{\frac{3n}{k}}$ be encoding functions defined as
follows:

$f_A: u_i\mapsto A_1A_2\ldots A_{n/k}$, where 
$A_j = \left\{ \begin{array}{cc} 
		110 & \mbox{if~} j=i\\
		000 & \mbox{otherwise}
		\end{array}
	\right .$
 
$f_B: u_i\mapsto B_1B_2\ldots B_{n/k}$, where 
$B_j = \left\{ \begin{array}{cc} 
		011 & \mbox{if~} j=i\\
		000 & \mbox{otherwise}
		\end{array}
	\right .$ 

$f_B(\perp) = 0^{3n/k}$.

\noindent On promise inputs $x,y \in \U^k$, let $F_A(x)$ and $F_B(y)$ be defined as
$f_A(x_1)f_A(x_2)\ldots f_A(x_k)$ and $f_B(y_1)f_B(y_2)\ldots f_B(y_k)$,
respectively.

Under this encoding, it is easy to see that $\PHam{3n,2k,2k+2}(F_A(x), F_B(y))
=1$ if and only if \IND$(x,y)=1$.
Suppose $i+1$ is the first position where $\bot$ appears in $y$.
For each $j<i$ we have $x_j=y_j$ so the Hamming distance of $F_A(x_j)$ and
$F_B(y_j)$ is 2. Also for every position $j>i$ we have $y_j=\bot$ and hence
$F_B(y_j)=0^{3n/k}$, which results in a Hamming distance of 2 between $F_A(x_j)$
and $F_B(x_j)$. So the Hamming distance between $F_A(x)$ and $F_B(y)$ is
$2(k-1)$ plus the 
Hamming distance between $f_A(x_i)$ and $f_B(y_i)$, which is $2$ iff $x_i=y_i$
and 4 iff $x_i\neq y_i$ (since $x_i,y_i\in\U$).

Therefore we get the following lower bound:
\begin{theorem}[Theorem \ref{thm:lb} restated]
Any randomized one-way protocol that makes at most $k/n$ error and computes
$\PHam{3n,2k,2k+2}$, requires $\Omega(k\log(n/k))$ bits. In fact the lower bound
holds for distinguishing between the case $\Delta(x,y) = 2k$ and
$\Delta(x,y)=2k+2$. 
\end{theorem}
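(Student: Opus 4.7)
The plan is to reduce the augmented indexing problem \IND\ to $\PHam{3n,2k,2k+2}$ and then invoke the lower bound of \cite{JW11}. I will instantiate \IND\ with $N = k$ coordinates and a domain $\U$ of size $\Theta(n/k)$; the Jayram--Woodruff bound then yields $\Omega(N\log|\U|) = \Omega(k\log(n/k))$ bits of communication for any one-way protocol with error at most $1/(4|\U|) = \Theta(k/n)$, which matches the $k/n$ error budget in the theorem (up to absorbable constants). Any one-pass streaming algorithm for $\PHam{3n,2k,2k+2}$ using space $s$ gives a one-way communication protocol of cost $s$ for the same problem, so space lower bounds follow immediately from communication lower bounds.

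The core of the reduction is a block-wise encoding that separates the two promised distances by exactly $2$. I would define $f_A:\U\to\{0,1\}^{3n/k}$ and $f_B:\U\cup\{\perp\}\to\{0,1\}^{3n/k}$ using $n/k$ length-$3$ sub-blocks: Alice writes $110$ in the sub-block indexed by $x_j$ and zeros elsewhere, while Bob writes $011$ in the sub-block indexed by $y_j$ when $y_j\in\U$ and the all-zero string when $y_j = \perp$. Concatenating over the $k$ coordinates gives $F_A(x),F_B(y)\in\{0,1\}^{3n}$, and Alice and Bob can form these encodings locally from their inputs.

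The verification is a position-by-position distance count. Let $i$ be the unique promise position with $y_i\in\U$. For $j<i$, the promise says $x_j = y_j$, so $f_A(x_j)$ and $f_B(y_j)$ activate the same sub-block with patterns $110$ and $011$, contributing Hamming distance exactly $2$. For $j>i$, we have $y_j = \perp$ and $f_B(y_j) = 0^{3n/k}$, so the contribution is just the weight of $f_A(x_j)$, again exactly $2$. At $j=i$, the contribution is $2$ if $x_i = y_i$ (same sub-block) and $2+2 = 4$ if $x_i \neq y_i$ (disjoint sub-blocks). Summing, $\Delta(F_A(x),F_B(y)) = 2k$ when $\text{\IND}(x,y) = 1$ and $2k+2$ when $\text{\IND}(x,y) = 0$, so the two cases of \IND\ map into the two gap values of $\PHam{3n,2k,2k+2}$.

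Composing the reduction with any algorithm for $\PHam{3n,2k,2k+2}$ of error $\leq k/n$ and space $s$ gives a one-way protocol for \IND\ of the same error and communication $s$, and then \cite{JW11} forces $s = \Omega(k\log(n/k))$. I do not expect any genuine obstacle: the encoding is essentially forced once one insists on an \emph{exact} gap of $2$ between the \IND-yes and \IND-no cases (so that the partial function $\PHam{}$ is well-defined on the range of the reduction), and the only subtlety is sizing $\U$ so that the $1/(4|\U|)$ error threshold in the Jayram--Woodruff bound aligns with the $k/n$ error budget, which is achieved by taking $|\U| = \Theta(n/k)$.
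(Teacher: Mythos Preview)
Your proposal is correct and essentially identical to the paper's own proof: the paper also sets $N=k$, $|\U|=n/k$, uses the same $110$/$011$ length-$3$ sub-block encoding for $f_A,f_B$ (with $f_B(\perp)=0^{3n/k}$), performs the same block-by-block distance count yielding $2k$ versus $2k+2$, and then invokes the Jayram--Woodruff bound with $\delta=1/(4|\U|)$.
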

By Theorem \ref{thm:ham} this bound is optimal when $n \geq k^2$ (and in fact when $n
\geq k^{1+\epsilon}$, for constant $\epsilon > 0$). 

\end{document}